\newcommand{\labitem}[2]{%
\def\@itemlabel{\textbf{#1}}
\item
\def\@currentlabel{#1}\label{#2}}
\newtheorem{theorem}{Theorem}
\newtheorem{lemma}[theorem]{Lemma}								%
\newtheorem{proposition}[theorem]{Proposition}	
\newtheorem{corollary}[theorem]{Corollary}	
\newtheorem{definition}[theorem]{Definition}
\newtheorem{remark}{Remark}
\numberwithin{equation}{section}	
\numberwithin{theorem}{section}
\renewcommand{\vec}[1]{\mathbf{#1}}
\def\E{\mathbb{E}}			
\def\P{\mathbb{P}}										
\def\Q{\mathbb{Q}}
\def\R{\mathbb{R}}
\def\A{\mathcal{A}}
\def\F{\mathcal{F}}
\def\B{\mathcal{B}}
\def\M{\mathcal{M}}
\def\d{\partial}		
\def\ind{\mathbb{I}}
\newcommand{\e}[1]{\operatorname{e}^{#1}}
\newcommand{\dom}{\operatorname{dom}}
\newcommand{\domu}{L^\infty_{*}}
\renewcommand{\vec}[1]{\mathbf{#1}}
\def\KC{\mathrm{KC}}
\def\PHI{\mathrm{PHI}}
\DeclareMathOperator*{\argmax}{arg\,max}
\DeclareMathOperator*{\esssup}{ess\,sup}
\DeclareMathOperator*{\essinf}{ess\,inf}
\newtheorem{example}[theorem]{Example}
\long\def\symbolfootnote[#1]#2{\begingroup\def\thefootnote{\fnsymbol{footnote}}\footnote[#1]{#2}\endgroup}
\begin{document}

\title{Decentralized Prediction Markets and Sports Books} 

\author{
Hamed Amini
\thanks{
University of Florida, Department of Industrial and Systems Engineering, Gainesville, FL 32611, USA, email: {\tt aminil@ufl.edu}.
}
\and Maxim Bichuch
\thanks{
Department of Mathematics,
SUNY at Buffalo,
Buffalo, NY 14260, USA. 
{\tt mbichuch@buffalo.edu}. 
}
\and Zachary Feinstein
\thanks{
School of Business,
Stevens Institute of Technology,
Hoboken, NJ 07030, USA,
{\tt  zfeinste@stevens.edu}. 
}
}
\date{\today}
\maketitle

\begin{abstract}
Prediction markets allow traders to bet on potential future outcomes. These markets exist for weather, political, sports, and economic forecasting. Within this work we consider a decentralized framework for prediction markets using automated market makers (AMMs). Specifically, we construct a liquidity-based AMM structure for prediction markets that, under reasonable axioms on the underlying utility function, satisfy meaningful financial properties on the cost of betting and the resulting pricing oracle. Importantly, we study how liquidity can be pooled or withdrawn from the AMM and the resulting implications to the market behavior. In considering this decentralized framework, we additionally propose financially meaningful fees that can be collected for trading to compensate the liquidity providers for their vital market function. \\
{\bf Keywords:} Decentralized Finance, FinTech, Automated Market Makers, Prediction Market, Sports Book. 
\end{abstract}


\section{Introduction}\label{sec:intro}

\subsection{Motivation}\label{sec:intro-motivation}

Decentralized Finance (DeFi) -- the novel paradigm utilizing blockchain for financial intermediation -- has the opportunity to democratize finance insofar as it opens the positions of financial intermediaries to individual investors. Prior to the ``crypto winter'' in May 2022, the value of all DeFi projects reached a high of \$180B.\footnote{\url{https://tradingview.com/markets/cryptocurrencies/global-charts/}} Automated market makers (AMMs) offer a prominent example of a successful DeFi application; these entities create spot markets between digital assets by holding liquidity pools in an equilibrium. As a decentralized market, these AMMs allow investors to pool their own assets into the market in exchange for a fraction of the fees being collected from trading with the pool.

Though AMMs are now regarded as a key component of DeFi, the idea of an automated market maker was first proposed for use in prediction markets (see the literature review in Section~\ref{sec:intro-lit} below or \cite[Section 1.1]{schlegel2022axioms}). However, in the context of prediction markets, such AMMs have been written as centralized markets only. The goal of this paper is to revisit prediction market AMMs with an emphasis on how to decentralize behaviors. In doing so, we seek to unify the concepts of AMMs for spot markets and prediction markets. In particular, the decentralized paradigm requires careful consideration for how to accommodate changing liquidity at the AMM (to permit investors to pool or withdraw liquidity) between the opening of the market and prior to the revelation of the realized event.

More specifically, once we define the notion of the prediction market, we want to consider the properties of these AMMs, i.e., for the cost assessed to bettors and the quoted pricing measure. Though many of these properties have been studied previously in the centralized market setting (see the literature review in Section~\ref{sec:intro-lit} below), all works the authors' are aware of require a finite set of possible outcomes and without any explicit consideration for the impact of liquidity on market behavior. Herein, we generalize such markets to allow for bets to be placed on general probability spaces with the explicit dependence on the cash re

Furthermore, due to the possibility of investing in the AMM, it is important to quantify the fees charged to traders. As highlighted in, e.g.,~\cite{bichuch2022axioms}, naively defining transaction costs can lead to unforeseen consequences to the profits of the AMM. Therefore, to avoid obvious mis-pricing, we also investigate these issues within the setting of prediction markets; prior works avoid these issues due to the centralized nature of the market.

Lastly, by introducing the mathematical construction for decentralized prediction markets, we believe that new products can be introduced to allow for, e.g., decentralized sports books which operate at a fraction of the cost of centralized sports books. Therefore, throughout this work we keep an eye towards the practicality of implementation and costs for these operations.

\subsection{Literature review}\label{sec:intro-lit}

Prediction markets seek to elicit probabilities of events through the aggregation of investor beliefs \cite{chen2013cost}. If an investor has a different belief of the likelihood of an event he or she can transact to capture that discrepancy which moves the quoted probabilities. These markets can be applied to weather forecasting (e.g., \cite{murphy1984probability}) or economic forecasting (e.g., \cite{o1977subjective}). However, even in a finite state space, Bayesian updating of these probabilities can be analytically and computationally challenging (see, e.g., \cite{aumann2016agreeing,geanakoplos1982we,hanson2002disagreement}).
One approach taken in practice is to use a scoring rule; repeated games with the scoring rule will lead to a Nash equilibrium \cite{kalai1993rational} and thus convergence to a common estimator for the market participants \cite{monderer1989approximating,hanson2003bayesian}.

In practice, financial \cite{leroy1998market}, sports \cite{hausch2008efficiency} and election prediction markets perform well. However, these markets can sometimes result in irrational prices, especially when market liquidity is low. Choosing the correct market structure, i.e., a scoring rule with appropriate properties such as the logarithmic scoring rule of \cite{hanson2007logarithmic} or a Bayesian framework of \cite{dai2021wisdom}, can improve market performance. Hanson's logarithmic scoring rule has been implemented in a dynamic prediction market and achieved good results~\cite{othman2010automated}. 
It has been generalized, and characterized as utility functions, in~\cite{chen2012utility}. Furthermore, that paper highlights an additional property for a market to have, i.e., bounded loss. \cite{othman2013practical} emphasizes other desirable properties for prediction markets, e.g., liquidity sensitivity.
We also refer the reader to \cite{wolfers2006prediction,tziralis2007prediction} for a survey of literature on prediction markets.

These predictions market scoring rules form AMMs. More recently, AMMs have been used prominently as \emph{decentralized} markets for digital assets. These decentralized AMMs were described in the early whitepapers of \cite{uniswapv1,uniswapv2}. Those whitepapers emphasize the key mathematical and algorithmic components of AMMs. 
One such idea that provides an algorithm for an AMM to decide on quantities to swap is the constant function, see e.g.~\cite{angeris2020improved,lipton2021automated,angeris2021constant,clark2020replicating,cartea2022decentralised,capponi2021adoption}. The structure for many decentralized AMMs was provided in \cite{xu2021sok,angeris2020does}. These constructions were generalized and given axiomatic foundations in~\cite{schlegel2022axioms,bichuch2022axioms,frongillo2023axiomatic}.

Within this work, we do not attempt to compare the classical centralized structure of running a prediction market with the currently proposed decentralized framework. We focus instead on providing the key properties of such a market -- the ability to provide a trusted counterparty throughout the entire life of the market and collect fees for performing that service -- while also allowing investors to pool resources and act as passive investors in the market.

\subsection{Primary contributions}\label{sec:intro-contributions}

As the goal of this work is to construct decentralized prediction markets via AMMs, our primary contributions are threefold.
\begin{itemize}
\item We generalize the classical prediction market AMM setting of \cite{hanson2007logarithmic} from a finite to a general probability space. In particular, we follow the approach of~\cite{chen2012utility,othman2013practical} to study these AMMs as utility functions with an \textbf{axiomatic construction}. In doing so, we investigate the properties of the cost function and pricing oracle of these AMMs. For instance, we demonstrate that with our proposed structure, a pricing measure is guaranteed to exist between the bid and ask pricing oracles.
\item We formalize the \textbf{decentralized liquidity provision} for prediction market AMMs to demonstrate that these markets can be made in a decentralized framework. We provide details on how to add or withdraw liquidity both before the market opens (as is standard in the classical framework) and 
 after some bets have already been placed. Additionally, we investigate the impacts that these variations in liquidity have on market behavior. Notably, this added liquidity is \emph{not} deterministic but rather a bet itself. To the best of the authors' knowledge, no prior work on prediction markets has allowed for adding liquidity after the market opens. 
\item We define a novel \textbf{fee structure} to retain the important financial properties on the cost function while allowing liquidity providers to collect a profit from their investments. 
Prior works have considered fee structures for AMMs via, e.g., arbitrage opportunities or market trading frequency/volatility \cite{milionis2022automated,milionis2023automated,cao2023automated}; however, these approaches are not well suited for a prediction AMM.
As far as the authors are aware, no prior work on prediction markets has \emph{explicitly} defined the fees to be collected.\footnote{Other works (e.g., \cite{ostrovsky2009information}) consider an implicit bid-ask spread. The implicit construction differs from the explicit fees as they do not guarantee that the market maker earns a riskless profit from transacting.}
\end{itemize}

\subsection{Organization of this Paper}\label{sec:intro-organization}
The organization of this paper is as follows.  
A summary and generalization of prior works on prediction markets is provided within Section~\ref{sec:hanson}. This discussion is followed by our proposed structure for a prediction market AMM in an axiomatic framework within Section~\ref{sec:amm-construction}. With that discussion, we formulate meaningful mathematical and financial properties for the cost function and pricing oracle in Section~\ref{sec:amm-properties}. For instance, we demonstrate that there natively exists bid and ask pricing oracles with a pricing probability measure on the underlying measurable space sandwiched between these prices.  In Section~\ref{sec:pooling}, we study the problem of decentralized liquidity provision for these prediction markets. In particular, we discuss how to provide liquidity so as to decrease price slippage during trading. In Section~\ref{sec:fees}, we present a discussion of how to explicitly collect fees on bets in this market. Notably, due to the general construction utilized herein, these fees are assessed in a non-trivial manner so as to retain the important financial properties of the AMM.
Finally, in Section~\ref{sec:cases}, we consider two numerical case studies. First, we provide an empirical study of the behavior of an AMM when applied to data on sports betting for the 2023 Super Bowl. Second, we apply a prediction market AMMs for the purposes of quoting financial options prices.
We summarize and conclude in Section~\ref{sec:conclude}.

\section{Generalized Structure for Automated Market Makers}\label{sec:hanson}

Fundamentally, an AMM is a liquidity pool -- made up of deposits from liquidity providers -- against which traders can execute transactions at prices based on mathematical algorithms without requiring a human-in-the-loop.
We take the view from DeFi that AMM algorithms are equivalent to utility functions (see, e.g., \cite{schlegel2022axioms,bichuch2022axioms}); the relationship between AMMs and utility functions was previously explored for prediction markets within~\cite{chen2007utility}.
The primary purpose of this manuscript is to explain how prediction market AMMs can be constructed so as to explicitly consider the actions, as well as the risks and rewards, of liquidity providers to guarantee there is liquidity available to cover the costs of the winning bets.
The AMM begins accepting bets once it has some amount of liquidity and allows bettors to place bets up until some fixed time or event, at which time the winning bets are paid out.\footnote{Herein, we do not explicitly consider the evolution of information as would happen, e.g., for in-game sports betting.} In Section~\ref{sec:pooling}, we consider how to allow for new liquidity provision (or withdrawals of liquidity by liquidity providers) even after the market has already begun accepting bets. However, since the payoffs of the winning bets come from the liquidity pool, liquidity provision is an inherently risky position; because of this, liquidity providers need to be compensated with fees (as disucssed in Section~\ref{sec:fees}). While bettors have a payoff based on the specific bet placed, the liquidity providers share the fees and any liquidity remaining after all bets are paid off.

We wish to begin our study of automated market makers for prediction markets by recalling (a generalization of) Hanson's market maker~\cite{hanson2007logarithmic}. This construction follows the utility function framework of \cite{chen2007utility} but with explicit consideration for the cash reserves available.

\begin{definition}\label{defn:general}
Consider a probability space $(\Omega,\F,\P)$.\footnote{In this context, $\P$ represents a physical measure that is not used throughout our analysis, except in examples.}
Let $L^\infty$ denote the space of uniformly bounded random variables and $L^\infty_+ := \{x \in L^\infty \; | \; x \geq 0~\text{a.s.}\}$.  An \textbf{\emph{automated market maker (AMM)}} is a utility function $U: L^\infty_+ \times \R_+ \to \R \cup \{-\infty\}$ that is non-increasing in its first (random-valued) component and non-decreasing in its second (real-valued) component.
\end{definition}
We can interpret an AMM $U: L^\infty_+ \times \R_+ \to \R \cup \{-\infty\}$ as the utility $U(\pi,L)$ of betting positions $\pi \in L^\infty_+$ and cash reserves $L \geq 0$.  More specifically, $\pi(\omega)$ denotes the payout that would need to be made by the market maker in the event that outcome $\omega \in \Omega$ is realized; $L \geq 0$ denotes the capital held by the market maker that is available to be used for the eventual payouts when the predicted event is realized. Bets with negative payouts, e.g., from selling a bet to the AMM, can be accommodated by buying the complement and shifting the cash reserves accordingly as will be made clear in~\eqref{eq:CI} below.

The classical goal of a prediction market is to be able to give a price to a new incoming bet $x\in L^\infty$. This, of course, depends on the current state of the market $(\pi, L)$. As opposed to a standard prediction market, where the cost function of the bet plays the key role, here we utilize the utility indifference approach from DeFi (\cite{schlegel2022axioms,bichuch2022axioms}). In this way, the AMM must not be worse off after a new bet is placed (and the cost of the bet was collected) than it was before this bet was placed. This in turn allows us to define the cost function of any bet. We proceed as follows. That is, to place a bet with a payout of $x \in L^\infty$, the bettor must make a payment of
\begin{equation}\label{eq:CI}
C(x;\pi,L) = \inf\{c \in [\essinf x,\esssup x] \; | \; U(\pi + x - \essinf x \vec{1} , L+c-\essinf x) \geq U(\pi,L)\}, 
\end{equation}
with initial betting positions $\pi \in L^\infty_+$ and cash reserves $L \geq 0$.\footnote{Prior works on prediction market AMMs, e.g.~\cite{chen2007utility}, formulate the cost via integration of infinitely small transactions instead as they do not explicitly consider the available cash reserves $L$; it can readily be shown that these formulations are equivalent. In doing so, prior works focus on proving there exists a bounded loss so that there is some minimal liquidity that guarantees AMM solvency.} Throughout this work we define $\vec{1} \in L^\infty$ to be equal to $1$ almost surely (a.s.).
This is the minimal amount of additional cash necessary to guarantee that the market maker has a nondecreasing utility. Explicitly, in this way, we separate the random part of the bet ($x-\essinf x \vec{1}$) from its direct impact on the cash reserves ($-\essinf x$). In fact, this formulation encodes that a trader provides her own cash reserves to settle the constant part ($\essinf x \vec{1}$) of any bet.
We wish to remark that the setting in which $x \geq 0$ almost surely corresponds to purchasing bets whereas $\P(x < 0) > 0$ allows for the trader to sell a bet to the AMM (i.e., receive a guaranteed payment today in exchange for paying if a specific event occurs). Though we do not impose it here, it may be desirable for the AMM to impose a no-short selling constraint on traders so that the AMM does not need to worry about counterparty risk in the future.

\begin{remark}\label{rem:indifference}
The purchasing price $C(x;\pi,L)$ for some bet $x \in L^\infty$ can be viewed as a utility indifference price (see e.g. \cite{carmona2008indifference}). Specifically, assuming $U$ is strictly increasing in its second (cash reserves) input and sufficiently continuous, we can rewrite $c = C(x;\pi,L)$ as the unique solution to $U(\pi+x-\essinf x \vec{1};L+c-\essinf x) = u(\pi,L)$ provided it exists (i.e., the AMM has sufficient cash reserves to trade the bet $x$). Herein, we choose to follow the optimization formulation of~\eqref{eq:CI} as existence and uniqueness of the cost are trivial for all $(x;\pi,L) \in L^\infty \times L^\infty_+ \times \R_+$.
\end{remark}

In addition to acting as a liquidity provider, the AMM can also provide \emph{pricing oracles} $P^a, P^b: L^\infty \times L^\infty_+ \times \R_+ \to [0,1]$ which give the marginal cost of placing positive and negative bets respectively. More specifically, we define the ask and bid pricing oracles, respectively, as the cost of placing positive and negative marginal sized bets:
\begin{equation}\label{eq:price1}
P^a(x;\pi,L) := \lim_{t \searrow 0} \frac{1}{t} C(tx;\pi,L)\ \ \text{and} \ \ P^b(x;\pi,L) := \lim_{t \nearrow 0} \frac{1}{t} C(tx;\pi,L),
\end{equation}
for any bet $x \in L^\infty$, betting position $\pi \in L^\infty_+$ and cash reserves $L \geq 0$. It is desirable that $P^b(\cdot,;\pi,L) \le P^a(\cdot,;\pi,L)$, and that there exists a probability measure $\Q$ such that the bid and ask prices sandwich an expectation with respect to $\Q$. In general, this sandwiched probability measure $\Q$ will depend on the current state $(\pi, L)$.

Assuming that the cost function has sufficient mathematical regularity to guarantee its differentiability, the bid and ask spread disappears, and the (unique) price becomes
\begin{equation}\label{eq:P}
P(x;\pi,L) := \frac{\d}{\d t} C(t x;\pi,L)|_{t = 0}.
\end{equation}
This also selects the unique measure $\Q$ (for the fixed $(\pi,L)$), the expectation with respect to which must then be equal to the pricing oracle $P$, i.e., $\Q(A) := P(\ind_{A};\pi,L)$ for any $A \in \F$. As such, we can view $\Q$ as a pricing measure, providing a `quoted price' for $x \in L^\infty$ via $\E^\Q[x]$ under market conditions $(\pi,L)$. Notably, this quoted price differs from $C(x;\pi,L)$ due to the price slippage inherent in the form of $C$ given in \eqref{eq:CI}.

\begin{example}\label{ex:hanson}
Within this example we want to study the classical structure of Hanson's AMM~\cite{hanson2007logarithmic} with logarithmic scoring rule.   
As provided in~\cite{chen2007utility}, this AMM can be formulated according to Definition~\ref{defn:general} as a utility function $U^H: L^\infty \times \R_+ \to \R \cup \{-\infty\}$ for any bet sizes $\pi \in L^\infty_+$ and cash reserves $L \geq 0$ as
\[U^H(\pi,L) := \log\left(\E\left[1-\exp(-\gamma(L\vec{1}-\pi))\right]\right)\]
for $\gamma > 0$ where, by convention, $\log(x) = -\infty$ for any $x \leq 0$.
By construction of this utility function, the cost of purchasing a bet with payout of $x \in L^\infty$ is explicitly provided by:
\[C^H(x;\pi,L) = \frac{1}{\gamma}\log\left(\frac{\E[\exp(\gamma(\pi + x))]}{\E[\exp(\gamma\pi)]}\right).\]
In this case $\frac{d\Q}{d\P} = \frac{\e{\gamma \pi}}{\E[\e{\gamma\pi}]}$ is the pricing measure under market conditions $(\pi,L)$; notably, while $\Q$ does \emph{not} depend on the existing reserves $L$, it does depend on the existing bets $\pi$.
Recall from Remark~\ref{rem:indifference} that the cost $C^H$ can be viewed as a utility indifference price. Specifically writing $U(\pi,L) = u(L\vec{1}-\pi)$, and assuming $u$ is strictly increasing, we can rewrite $c = C^H(x;\pi,L)$ as the unique solution to $u(L\vec{1}+c\vec{1}-\pi-x) = u(L\vec{1}-\pi)$.

Often, prediction markets are only considered with \emph{finite} probability spaces \cite{hanson2007logarithmic}.  Here, and in other examples below, we will utilize the finite probability space $(\Omega_N,2^{\Omega_N},\P_N)$ with $\P_N(\omega) = \frac{1}{N}$ for every $\omega \in \Omega_N$ and $\operatorname{card}(\Omega_N) = N$, for some $N \in \mathbb{N}$.
Hanson's AMM is more commonly written as the integral of the marginal costs~\cite{berg2009hanson}, i.e., for $x \in \R$,
\[C^H(x\ind_{\bar\omega};\pi,L) = \int_0^x \frac{\exp(\gamma(\pi(\bar\omega)+z))}{\exp(\gamma(\pi(\bar\omega)+z)) + \sum_{\omega \neq \bar\omega} \exp(\gamma\pi(\omega))} dz.\]
As is made clear from the cost function $C^H$, Hanson's AMM prices bets independent of the cash reserves available.  However, in practice, it is important that the market maker has sufficient cash reserves to cover any bets that need to be paid out, i.e., $L \geq \esssup\pi$ at any realized bets $\pi \in L^\infty_+$ and cash reserves $L > 0$.
Formally, we want $L + C^H(x;\pi,L) \geq \esssup(\pi+x)$ for any $x\in L^\infty$, $\pi \in L^\infty_+$ and $L > \esssup \pi$ so that the cash reserves after a bet is placed is always sufficient to cover the worst case payouts (and assuming such a property holds before $x$ is traded). By bounding the log-sum-exponential functions, it can be readily shown that $L + C^H(x;\pi,L) > L - \log(N)/\gamma - \esssup \pi + \esssup(\pi+x)$. That is, sufficient cash reserves exist if $L \geq \log(N)/\gamma + \esssup \pi$ or, equivalently, the risk aversion $\gamma \geq \log(N)/[L - \esssup \pi]$ is sufficiently large. Beginning from $\pi = \vec{0}$ and initial cash reserves $L_0 > 0$, this gives an initial bound $\gamma \geq \log(N)/L_0$; in fact, in this initial time point, this bound is necessary and sufficient for the AMM to have the requisite cash reserves for any bet. Furthermore, this inital bounding condition $\gamma \geq \log(N)/L_0$ is sufficient for the AMM to be liquid throughout its operation.\footnote{This can be demonstrated due to the path independence property as introduced in Theorem~\ref{thm:properties}\eqref{split} below.}
\end{example}

The generalized structure of AMMs presented here is broad so as to encompass even structures that may not have sufficient cash reserves to cover all possible bets (i.e., if $L < \esssup\pi$). 
For instance, if $\gamma < \log(N)/L_0$ for Hanson's AMM presented in Example~\ref{ex:hanson} over a finite probability space, then there exists some bet that creates a probability that the market maker will default on (a fraction of) its obligations; notably this occurs once an infinite number of events needs to be considered.
Taking the inspiration from the form $U^H$ and following the structure of the utility functions in~\cite{chen2007utility}, for the remainder of this work, we will focus on a special structure for AMMs based on the remaining cash reserves for all outcomes, i.e., such that $U(\pi,L) := u(L\vec{1}-\pi)$ for some continuous and nondecreasing utility function $u: L^\infty_+ \to \R \cup \{-\infty\}$.  As will be shown in the subsequent sections, these cash reserves-based AMMs can readily guarantee sufficient cash reserves will always exist (as well as other useful properties for a market); we refer the interested reader to~\cite{chen2007utility} which discusses the maximum loss for such an AMM under the finite probability space $(\Omega_N,2^{\Omega_N},\P_N)$.  
Additionally, by specifically accounting for cash reserves in the structure of an AMM, we allow for changes and dynamics in cash reserves to occur. In particular, we permit investors to pool cash into the AMM (even after betting has started) to collect a portion of the market proceeds; this notion of pooling is expanded upon in Section~\ref{sec:pooling}. This decentralization can lead to more responsive liquidity provision as investors flock to volatile markets (to collect fees, see Section~\ref{sec:fees}). Due to monotonicity (and as will be formally proven below), this increased liquidity has the ancillary benefit of reducing the price impacts from trading which ex-ante drives down the volatility. 

\section{Liquidity-Based Automated Market Makers}\label{sec:amm}
As discussed above, within this section we wish to study those automated market makers of the form $U(\pi,L) := u(L\vec{1}-\pi)$ for some utility function $u: L^\infty_+ \to \R \cup \{-\infty\}$.\footnote{In the subsequent sections we will explicitly provide the domain $\dom u := \{x \in L^\infty_+ \; | \; u(x) > -\infty\}$ of $u$.}  
As introduced in Section~\ref{sec:hanson}, we again define a prediction market AMM by its utility function $u$ rather than a cost function of a bet $C$; we then derive the cost function using the utility indifference argument.
Within Section~\ref{sec:amm-construction}, we propose some basic axioms that any such liquidity-based automated market maker should satisfy.  The implications of these axioms on bets are investigated within Section~\ref{sec:amm-properties}.
To simplify notation, for the remainder of this work we will let $\Pi := L\vec{1}-\pi$ denote the liquidity remaining for each outcome (contingent on that outcome being realized); we will demonstrate in Section~\ref{sec:amm-properties} that we can always guarantee that $\essinf \Pi > 0$ under the desired axioms (provided that $L_0 > 0$ for the initial cash reserves of the market maker).

\subsection{Construction}\label{sec:amm-construction}
As expressed in the introduction of this section, our first goal is to characterize the liquidity-based automated market makers as utility functions $u: \domu := \{x\in  L^\infty_+ \; | \; \essinf x>0\} \to \R$ satisfying useful mathematical properties.\footnote{We will set $u(x) = -\infty$ for $x \in L^\infty\backslash\domu$.} The following definition encodes the minimal set of axioms that we utilize throughout this work.
\begin{definition}\label{defn:amm}
A utility function $u: \domu := \{x\in  L^\infty_+ \; | \; \essinf x>0\} \to \R$ is a \textbf{\emph{liquidity-based automated market maker [LBAMM]}} if:
\begin{enumerate}
\item\label{defn:amm-1} $u$ is upper semi-continuous on $L^\infty_+$ with respect to the weak* topology;\footnote{For any $x$ in the domain of $u$ and any $\epsilon > 0$, there exists a neighborhood of $x$ in the weak* topology on $L^\infty$ such that for all $y$ in this neighborhood, $u(y) \leq u(x) + \epsilon$.}
\item\label{defn:amm-2} $u$ is strictly increasing;\footnote{For any $x_1, x_2 \in \domu$, if $x_1 \geq x_2$ a.s.\ and $\P(x_1 > x_2) > 0$, then $u(x_1) > u(x_2)$.} and
\item\label{defn:amm-3} $u$ is quasiconcave.\footnote{For all $x_1,x_2\in \domu$ and for all $t\in [0,1]$, $u(tx_1+(1-t)x_2)\geq \min\{u(x_1), u(x_2)\}$.} 
\end{enumerate}
\end{definition}
\begin{remark}\label{rk-cvg}
Note that from Definition \ref{defn:amm}\eqref{defn:amm-1}-\eqref{defn:amm-2}, it follows (see, e.g.,~\cite[Theorem 4.31]{follmer2008stochastic}) that $u$ is almost surely continuous from above, i.e., if $x_n \searrow x$ a.s.\ then $u(x_n) \searrow u(x)$.
\end{remark}

Let $\Pi \in \domu$ denote the liquidity remaining for each outcome. Then as described in~\eqref{eq:CI} above, the cost under an LBAMM of purchasing a payoff of $x \in L^\infty$ is defined such that
\begin{equation}\label{eq:C}
C(x;\Pi) := \inf\{c \in [\essinf x,\esssup x] \; | \; u(\Pi - x + c\vec{1}) \geq u(\Pi)\}.
\end{equation}
To simplify notation, where clear from context, we will drop the explicit dependence of $C$ on $\Pi$, i.e., we will denote $C(x) := C(x;\Pi)$.

Before continuing our study of the properties of LBAMMs, we provide a simple example of one such AMM in a finite probability space.  This construction is based on Uniswap V2 which is a popular AMM in decentralized finance~\cite{uniswapv2}.  As we will see, this construction provides a closed form representation for the cost of a bet in the 2 event setting. Notably, this AMM was first discussed in~\cite{chen2007utility} for prediction markets (in a finite probability space) before its use in decentralized finance. 
\begin{example}\label{ex:uniswap}
Consider the finite probability space $(\Omega_N,2^{\Omega_N},\P_N)$ as in Example \ref{ex:hanson} with $\P_N(\omega) = \frac{1}{N}$ for every $\omega \in \Omega_N := \{\omega_1,\dots, \omega_N\}$ with cardinality $N$. Consider the logarithmic utility function $u(\Pi) = \E[\log(\Pi)] = \frac{1}{N} \sum_{i = 1}^N \log(\Pi(\omega_i))$ if $\min\Pi > 0$ and $u(\Pi) = -\infty$ otherwise. As discussed in, e.g., \cite{bichuch2022axioms}, the constant product market maker of Uniswap V2 is equivalent to a logarithmic utility indifference pricing; for this reason we consider this LBAMM to be the prediction market version of Uniswap V2. Then, for any $x \in L^\infty$, the cost for the bettor 
$$C(x; \Pi)=\inf\left\{c\in [\min x,\max x] \; | \; \prod_{i=1}^N (\Pi(\omega_i)-x(\omega_i)+c)\geq \prod_{i=1}^N \Pi(\omega_i)\right\}$$ satisfies a constant product rule
\[\prod_{i = 1}^N (\Pi(\omega_i)-x(\omega_i)+C(x;\Pi)) = \prod_{i = 1}^N \Pi(\omega_i).\]
In particular, for the setting with only $N=2$ possible outcomes,
\begin{align}
C(x; \Pi)&=
\frac{-\Pi(\omega_1)-\Pi(\omega_2)+x(\omega_1)+x(\omega_2)}{2}\\
&+\frac{\sqrt{\bigl[(\Pi(\omega_1)-x(\omega_1))-(\Pi(\omega_2)-x(\omega_2))\bigr]^2+4\Pi(\omega_1)\Pi(\omega_2)}}{2}
\end{align}
for any $x \in L^\infty$ and $\Pi \in \domu$.
We implement this utility function within Section~\ref{sec:superbowl} as an empirical case study.
\end{example}

We wish to note that the logarithmic utility of Example~\ref{ex:uniswap} in a general probability space does not, necessarily, satisfy all conditions of an LBAMM. In the following example, which concludes this discussion of the definition of LBAMMs, we introduce a modification which can be used to extend the logarithmic utility to general probability spaces. 

\begin{example}\label{ex:essinf}
Consider a general probability space $(\Omega,\F,\P)$. Let $u:L_+^\infty\to \R\cup\{-\infty\}$ be a strictly increasing, concave, and weak* upper semi-continuous function with $\dom u \supseteq \domu$. (For example, one might choose the logarithmic utility function $u(x) = \E[\log(x)]$ as seen in Example~\ref{ex:uniswap} or, more generally, set $u(x) = \E[\log(x)] + \lambda \log(\E[x])$ for some $\lambda \geq 0$ as taken in the Liquid StableSwap of~\cite{bichuch2022axioms}.) Fix $\epsilon\in (0,1)$, and define $\bar{u}_\epsilon: L_+^\infty \to \R\cup\{-\infty\}$ such that for all $x\in L_+^\infty$: 
\[\bar{u}_\epsilon(x) := (1-[\epsilon - \inf_{A \in \F_+} \P(A)]^+)u(x) + [\epsilon - \inf_{A \in \F_+} \P(A)]^+ \log(\essinf x),\] where $\F_+:=\{A\in \F \mid \P(A)>0\}$. It is straightforward to verify that this utility function satisfies all the required properties of an LBAMM as per Definition~\ref{defn:amm}. 
Note that, under the finite probability space $(\Omega_N,2^{\Omega_N},\P_N)$, if $\epsilon \leq \frac{1}{N}$ then $\bar{u}_\epsilon \equiv u$.
This utility function structure is implemented within Section~\ref{sec:derivative} to construct a simple financial derivatives market.\footnote{The gas fees for using a general utility function $\bar{u}_\epsilon$ can be controlled by utilizing modern blockchains such as Avalanche and Skale which charge developers a monthly fee for gas-less transactions.}
\end{example}

\subsection{Properties}\label{sec:amm-properties}
Given the construction of an LBAMM in Definition~\ref{defn:amm}, we can now consider the formal properties of the cost functions $C$ defined in~\eqref{eq:C}.

\begin{theorem}\label{thm:properties}
Let $u: \domu \to \R$ be an LBAMM with remaining liquidity $\Pi \in \domu$ and associated payment function $C: L^\infty \to \R$.
\begin{enumerate}
\item\label{bounds} \textbf{No arbitrage:} $C(x) \in [\essinf x,\esssup x]$ for $x \in L^\infty$ with attainment $C(x) \in \{\essinf x,$ $\esssup x\}$ if and only if $x = c\vec{1}$ for some $c \in \R$.
\item\label{equal} \textbf{Liquidity-bounded loss:} $u(\Pi-x+C(x)\vec{1}) = u(\Pi)$ and $\Pi-x+C(x)\vec{1} \in \domu$ for any $x \in L^\infty$. 
\item\label{convex-monotone} \textbf{Convex, monotone and lower semi-continuous:} $x \mapsto C(x)$ is strictly increasing, convex and lower semi-continuous (in the weak* topology). 
\item\label{split} \textbf{Path independent:} $C(x+y;\Pi) = C(x;\Pi) + C(y;\Pi-x+C(x;\Pi)\vec{1})$ for $x,y \in L^\infty$. As a direct consequence $C(x+c\vec{1};\Pi) = C(x;\Pi) + c$ for any $x \in L^\infty$ and $c \in \R$.
\end{enumerate}
\end{theorem}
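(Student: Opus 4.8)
The proof naturally splits into the four claimed properties, and I would attack them in an order that lets later parts reuse earlier ones. The central object is the indifference equation from~\eqref{eq:C}, so I would first establish property~\eqref{equal}, since it characterizes $C(x)$ as the exact level at which the utility is restored rather than merely an infimum over a constraint set. The plan is to show the constraint set $\{c \in [\essinf x, \esssup x] \mid u(\Pi - x + c\vec{1}) \geq u(\Pi)\}$ is nonempty (taking $c = \esssup x$ forces $\Pi - x + c\vec{1} \geq \Pi$ a.s., so strict monotonicity of $u$ gives $u(\Pi - x + \esssup x\,\vec{1}) \geq u(\Pi)$) and closed, the latter from the weak* upper semicontinuity in Definition~\ref{defn:amm}\eqref{defn:amm-1} together with the continuity-from-above noted in Remark~\ref{rk-cvg}. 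Closedness gives attainment of the infimum, and strict monotonicity forces the inequality to be an equality at $c = C(x)$: if $u(\Pi - x + C(x)\vec{1}) > u(\Pi)$ strictly, then by continuity from above one could decrease $c$ slightly and still satisfy the constraint, contradicting minimality. Finally $\Pi - x + C(x)\vec{1} \in \domu$ follows because $u$ is finite there (it equals $u(\Pi) > -\infty$) and $\dom u$ restricted to the relevant set is exactly $\domu$.

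With~\eqref{equal} in hand, property~\eqref{bounds} is largely a monotonicity argument. The bounds $C(x) \in [\essinf x, \esssup x]$ are immediate from the definition of the feasible set in~\eqref{eq:C}. For the attainment characterization I would argue both directions: if $x = c\vec{1}$, then $\Pi - x + C(x)\vec{1} = \Pi + (C(x) - c)\vec{1}$, and the indifference equality from~\eqref{equal} plus strict monotonicity forces $C(x) = c = \essinf x = \esssup x$. Conversely, suppose $C(x) = \esssup x$ but $x$ is not a.s.\ constant, so $\P(x < \esssup x) > 0$; then $\Pi - x + \esssup x\,\vec{1} \geq \Pi$ with strict inequality on a positive-probability set, whence strict monotonicity gives $u(\Pi - x + \esssup x\,\vec{1}) > u(\Pi)$, and the same continuity-from-above perturbation used above shows a strictly smaller feasible $c$ exists, contradicting $C(x) = \esssup x$. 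The case $C(x) = \essinf x$ is symmetric, using $\Pi - x + \essinf x\,\vec{1} \leq \Pi$.

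For property~\eqref{convex-monotone}, convexity is the one piece that genuinely uses quasiconcavity of $u$ (Definition~\ref{defn:amm}\eqref{defn:amm-3}). Writing $c_i = C(x_i)$ and using the indifference equality, for $t \in [0,1]$ the point $t(\Pi - x_1 + c_1\vec{1}) + (1-t)(\Pi - x_2 + c_2\vec{1}) = \Pi - (tx_1 + (1-t)x_2) + (tc_1 + (1-t)c_2)\vec{1}$ has utility at least $\min\{u(\Pi - x_1 + c_1\vec{1}), u(\Pi - x_2 + c_2\vec{1})\} = u(\Pi)$ by quasiconcavity; hence $tc_1 + (1-t)c_2$ is feasible for the bet $tx_1 + (1-t)x_2$, giving $C(tx_1 + (1-t)x_2) \leq tc_1 + (1-t)c_2$, which is convexity. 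Strict monotonicity of $C$ follows from strict monotonicity of $u$: if $x \geq y$ a.s.\ with $\P(x > y) > 0$ then the indifference level must strictly increase. Lower semicontinuity in the weak* topology is the structural dual of the upper semicontinuity of $u$; I would derive it by showing the sublevel sets $\{x \mid C(x) \leq a\}$ are weak* closed, rewriting this condition via~\eqref{equal} as $u(\Pi - x + a\vec{1}) \geq u(\Pi)$ and invoking the weak* upper semicontinuity of $u$.

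Property~\eqref{split}, path independence, is then a clean corollary of~\eqref{equal} and should be saved for last. The idea is to apply the indifference equality twice: first $\Pi' := \Pi - x + C(x;\Pi)\vec{1}$ satisfies $u(\Pi') = u(\Pi)$, and then trading $y$ against state $\Pi'$ gives $u(\Pi' - y + C(y;\Pi')\vec{1}) = u(\Pi')$. Substituting, $u(\Pi - (x+y) + (C(x;\Pi) + C(y;\Pi'))\vec{1}) = u(\Pi)$, and since~\eqref{equal} characterizes $C(x+y;\Pi)$ as the \emph{unique} constant restoring the utility $u(\Pi)$ (uniqueness again from strict monotonicity), we conclude $C(x+y;\Pi) = C(x;\Pi) + C(y;\Pi')$; the cash-translation consequence $C(x + c\vec{1};\Pi) = C(x;\Pi) + c$ follows by taking $y = c\vec{1}$ and noting $C(c\vec{1};\Pi') = c$ directly from~\eqref{eq:C}. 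The main obstacle throughout is the topological bookkeeping in the two semicontinuity arguments: ensuring the feasible set is genuinely closed (so the infimum is attained and equals an indifference level) and that the sublevel sets of $C$ inherit weak* closedness from $u$. Everything else reduces to careful but routine applications of strict monotonicity and quasiconcavity to the indifference identity.
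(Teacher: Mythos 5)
Your overall architecture mirrors the paper's exactly: prove the indifference equality \eqref{equal} first, deduce the attainment dichotomy in \eqref{bounds} from it via strict monotonicity, get convexity and lower semicontinuity in \eqref{convex-monotone} by identifying sublevel sets of $C$ (equivalently its epigraph) with superlevel sets of $u$, and obtain \eqref{split} by applying \eqref{equal} twice together with uniqueness of the indifference level. Parts \eqref{convex-monotone} and \eqref{split} of your proposal are fine and match the paper. The genuine gap is in your proof of the equality in \eqref{equal}, and it propagates into your treatment of \eqref{bounds}. You argue: if $u(\Pi - x + C(x)\vec{1}) > u(\Pi)$ strictly, then ``by continuity from above one could decrease $c$ slightly and still satisfy the constraint.'' Continuity from above (Remark~\ref{rk-cvg}) controls decreasing sequences of arguments: it says $u(\Pi - x + c_n\vec{1}) \searrow u(\Pi - x + C(x)\vec{1})$ as $c_n \searrow C(x)$, i.e., it relates the value at $C(x)$ to values at $c > C(x)$. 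It gives no information whatsoever about values at $c < C(x)$; to ``decrease $c$ slightly'' and stay feasible you would need continuity from \emph{below} of $c \mapsto u(\Pi - x + c\vec{1})$, which is not among the axioms. A strictly increasing, quasiconcave, weak* upper semicontinuous $u$ is permitted to jump \emph{upward} at $\Pi - x + C(x)\vec{1}$ when approached from below along the direction $\vec{1}$: upper semicontinuity only forces superlevel sets to be closed (upward jumps are allowed), and quasiconcavity restricted to a line is merely unimodality, not continuity. So strict feasibility at $C(x)$ does not produce a strictly smaller feasible $c$, and your ``contradiction with minimality'' never materializes. The same invalid downward perturbation is what you use to rule out $C(x) = \esssup x$ for non-constant $x$ in \eqref{bounds}.

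For comparison, the paper never perturbs downward at $C(x)$. For \eqref{equal} it proves the inequality $u(\Pi - x + C(x)\vec{1}) \geq u(\Pi)$ with the decreasing sequence $C(x) + \frac{1}{n}$ (equivalent to your closedness/attainment argument, and valid), and then argues the equality from the strict infeasibility at the left endpoint $c^* = \max\{\essinf x, -\essinf(\Pi - x)\}$ combined with continuity from above; the argument is anchored at the left end of the interval rather than at a local perturbation of $C(x)$. For \eqref{bounds} the paper then needs no limiting argument at all: with the equality of \eqref{equal} in hand, if $C(x) = \essinf x$ and $x$ is non-constant, strict monotonicity gives $u(\Pi - x + \essinf x\,\vec{1}) < u(\Pi)$, directly contradicting $u(\Pi - x + C(x)\vec{1}) = u(\Pi)$, and symmetrically for $\esssup x$. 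So your \eqref{bounds} is repaired simply by citing \eqref{equal} instead of re-running the perturbation (your $\essinf$ case already works with attainment alone, since strict monotonicity makes $\essinf x$ infeasible). What cannot be repaired this way is \eqref{equal} itself: your closedness argument only yields the inequality $u(\Pi - x + C(x)\vec{1}) \geq u(\Pi)$, and upgrading it to an equality requires ruling out an upward jump of $u$ along $\vec{1}$ exactly at $C(x)$ --- a genuinely delicate point (quasiconcavity, unlike concavity, gives no continuity along lines) for which your proposal supplies no valid argument.
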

\begin{proof}
\begin{enumerate}
\item By construction of the cost function $C$, it easily follows that $C(x) \in [\essinf x,\esssup x]$. If $x = c\vec{1}$ for some $c \in \R$, then trivially, $C(x) = c = \essinf x = \esssup x$. 
Now, let's consider the case where $C(x) = \essinf x$ (the case of $C(x) = \esssup x$ follows similarly). If $x \in L^\infty\backslash\{c\vec{1}|c\in\R\}$, then by the strict monotonicity of the utility function (and using the fact that $\Pi-x+C(x)\vec{1} \in \domu$ from Property~\ref{equal}), we have $u(\Pi-x+C(x)\vec{1}) = u(\Pi-[x-\essinf x\vec{1}]) < u(\Pi)$. This contradicts Property~\ref{equal}, which is proved below.
 
 \item Note that the sequence $\bigl(\Pi-x+(C(x)+\frac 1 n)\vec{1}\bigr) \searrow \Pi-x+C(x)\vec{1}$ a.s.\ as $n\to \infty$. Then, by Remark~\ref{rk-cvg}, we have
$u(\Pi-x+C(x)\vec{1})=\lim_{n\to \infty} u\bigl(\Pi-x+(C(x)+\frac 1 n)\vec{1}\bigr) \geq u(\Pi),$
as implied by the construction of $C$. Consequently, $\essinf(\Pi-x+C(x)\vec{1})>0$ which ensures that $\Pi-x+C(x)\vec{1} \in \domu$ for all $x\in L^\infty$. Furthermore, given that $u(\Pi-x+c^*\vec{1})<u(\Pi)$ for $c^*=\max\{\essinf(x), -\essinf(\Pi-x)\}$, the equality $u(\Pi-x+C(x)\vec{1}) = u(\Pi)$ must be satisfied due to the property of continuity from above.
 
 \item
\begin{enumerate}
\item \underline{Monotonicity:} To prove the monotonicity of $C$, let's consider an arbitrary $x \in L^\infty$ and $\delta \in L^\infty_+\backslash \{\vec{0}\}$. Assume $C(x+\delta) \leq C(x)$. By the strict monotonicity of the utility function and 
Property~\eqref{equal}, we arrive at a contradiction:
    \[u(\Pi) = u(\Pi-[x+\delta]+C(x+\delta)\vec{1}) < u(\Pi-x+C(x)\vec{1}) = u(\Pi).\]
    
    \item \underline{Convexity:} To prove the convexity of $C$ we will consider its epigraph.  That is, 
    \begin{align}
    \operatorname{epi}C = \{(x,c) \in L^\infty \times \R \; | \; C(x) \leq c\} = \{(x,c) \in L^\infty \; | \; u(\Pi-x+c\vec{1}) \geq u(\Pi)\}.~~~~~~~
\label{eq:epi-C}
    \end{align}
Since $u(\Pi) \in \R$ is a constant, the epigraph of $C$ is convex due to the quasiconcavity of the utility function $u$.

\item \underline{Lower semi-continuity:} As seen in \eqref{eq:epi-C}, the epigraph of $C$ corresponds to the hypograph for $u$. Therefore, by upper semi-continuity of $u$, the function $C$ is lower semi-continuous.

\end{enumerate}

\item Let $x,y \in L^\infty$. By~\eqref{equal}, we can immediately conclude
\[u(\Pi-x+C(x;\Pi)\vec{1}-y+C(y;\Pi-x+C(x;\Pi)\vec{1} )\vec{1}) = u(\Pi-x+C(x;\Pi)\vec{1}) = u(\Pi).\]
By construction of $C(x+y;\Pi)$ and the strict monotonicity of $u$, it must hold that $C(x+y;\Pi) = C(x;\Pi) + C(y;\Pi-x+C(x;\Pi)\vec{1})$.
The second property provided on the translativity of $C$, i.e., $C(x+c\vec{1};\Pi) = C(x;\Pi) + c$ for any $c \in \R$ is a direct consequence by recalling from property~\eqref{bounds} that $C(c\vec{1};\Pi-x+C(x;\Pi)\vec{1}) = c$.

\end{enumerate}
\end{proof}

\begin{remark}
\begin{enumerate}
\item Theorem~\ref{thm:properties}\eqref{equal} directly connects the LBAMM pricing scheme to a utility indifference price (see, e.g.,~\cite{carmona2008indifference}) as previously discussed in Remark~\ref{rem:indifference}. 
Additionally, we can interpret the LBAMM cost function $C$ as a nonlinear expectation due to Theorem~\ref{thm:properties}\eqref{bounds} and \eqref{convex-monotone}.
\item The second property of Theorem~\ref{thm:properties}\eqref{equal} implies there is infinite liquidity and infinite sized bets can be placed.  This guarantees that the AMM will never default because, under almost every realized outcome $\omega \in \Omega$, there still remains $\Pi(\omega)-x(\omega)+C(x) > 0$ cash after paying off all bets.
\item As a direct consequence of Theorem~\ref{thm:properties}\eqref{bounds} and~\eqref{split}, a clear no round-trip arbitrage argument follows.  Specifically, for any $x \in L^\infty$ and $c \in \R$, \[c = C(c\vec{1};\Pi) = C(x;\Pi) + C(c\vec{1}-x;\Pi-x+C(x;\Pi)\vec{1}).\]  That is, a guaranteed payoff of $c$ can only be obtained at a cost of $c$.  In particular, $C(x;\Pi) + C(-x;\Pi-x+C(x;\Pi)\vec{1}) = 0$ so that buying, and immediately selling, a bet results in no profits (or losses) for the trader.  This can be viewed as a version of the prior notion that the LBAMM can never default as, if an arbitrage such as this existed, traders could exploit this design flaw in order to guarantee profits at the expense of the LBAMM, which in the extreme case can cause the LBAMM to default on payments.
\end{enumerate}
\end{remark}

\begin{corollary}\label{cor:lipschitz}
Let $u: \domu \to \R$ be an LBAMM with remaining liquidity $\Pi \in \domu$. The associated payment function $C: L^\infty \to \R$ is Lipschitz continuous in the bet size with Lipschitz constant $1$.
\end{corollary}
\begin{proof}
Let $x,y \in L^\infty$. By the monotonicity and translativity proven in Theorem~\ref{thm:properties}:
\[C(x)-C(y) \leq C(y + \|x-y\|_\infty \vec{1}) - C(y) = \|x-y\|_\infty.\]
By symmetry between $x$ and $y$, Lipschitz continuity follows.
\end{proof}

Let $\Pi \in \domu$ be fixed. We now wish to consider the ask and bid pricing oracles, $P^a,P^b: L^\infty \to \R$, associated with the liquidity-based AMMs. Specifically, as constructed for the general AMMs in Section~\ref{sec:hanson}, we define the ask and bid pricing oracles, respectively, as the cost of placing a positive and negative marginal sized bet. That is,
\begin{equation}\label{eq:price}
P^a(x) := \lim_{t \searrow 0} \frac{1}{t} C(tx) \ \ \text{and} \ \ P^b(x) := \lim_{t \nearrow 0} \frac{1}{t} C(tx)=-P^a(-x).
\end{equation}

\begin{theorem}\label{thm:prices}
Let $u: \domu \to \R$ be an LBAMM with remaining liquidity $\Pi \in \domu$. The associated pricing oracles $P^a, P^b: L^\infty \to \R$, defined in~\eqref{eq:price}, satisfy $P^a(x)\geq P^b(x)$ for all $x\in L^\infty$. Furthermore, there exists a measure $\Q\sim\P$  such that $P^b(x)\leq \E^\Q[x]\leq P^a(x)$ for all $x\in L^\infty$.
\end{theorem}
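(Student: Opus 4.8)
The plan is to treat the cost functional $C(\cdot)=C(\cdot;\Pi)$ as (the negative of) a convex risk measure, so that $P^a$ is its right-hand directional derivative at the origin and admits a dual representation by genuine probability measures; any single representing measure will then dominate $P^a$ from below through $\E^\Q[\cdot]$ and simultaneously dominate $P^b$ from above, while strict monotonicity forces $\Q\sim\P$. First I would collect the structure of $C$ already proved: $C(\vec 0)=0$ (translativity of Theorem~\ref{thm:properties}\eqref{split} with $x=\vec 0$), convexity, strict monotonicity and weak* lower semicontinuity (Theorem~\ref{thm:properties}\eqref{convex-monotone}), and $1$-Lipschitzness (Corollary~\ref{cor:lipschitz}). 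For fixed $x$ the scalar map $g_x(t):=C(tx)$ is convex with $g_x(0)=0$, so its secant slope $g_x(t)/t$ is nondecreasing in $t$; hence the two limits in~\eqref{eq:price} exist, are finite since $|g_x(t)/t|\le\|x\|_\infty$ by Lipschitzness, and satisfy $P^b(x)=g_x'(0^-)\le g_x'(0^+)=P^a(x)$. This already gives the first assertion, and the change of variable $t=-s$ recovers $P^b(x)=-P^a(-x)$.

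Next I would verify that $P^a$ is sublinear, monotone, translative and weak* lower semicontinuous. Positive homogeneity is immediate; subadditivity follows from convexity of $C$ via $C(t(x+y))\le\tfrac12 C(2tx)+\tfrac12 C(2ty)$ after dividing by $t$ and letting $t\searrow 0$; monotonicity and translativity ($P^a(x+c\vec 1)=P^a(x)+c$) descend from the corresponding properties of $C$; and since $P^a=\sup_{t>0}\tfrac1t C(t\cdot)$ is a supremum of weak* lower semicontinuous maps, $P^a$ is weak* lower semicontinuous. Consequently $\rho(x):=P^a(-x)=-P^b(x)$ is a finite coherent risk measure with the Fatou property, and the robust representation theorem (see, e.g.,~\cite{follmer2008stochastic}, in the coherent case of Delbaen) yields a nonempty set $\mathcal{Q}$ of probability measures with $\Q\ll\P$ such that $P^a(x)=\sup_{\Q\in\mathcal{Q}}\E^\Q[x]$. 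Fixing any $\Q\in\mathcal{Q}$ gives $\E^\Q[x]\le P^a(x)$ for all $x$, and therefore $\E^\Q[x]=-\E^\Q[-x]\ge -P^a(-x)=P^b(x)$, which is exactly the sandwich $P^b(x)\le\E^\Q[x]\le P^a(x)$.

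To upgrade $\Q\ll\P$ to the asserted equivalence $\Q\sim\P$, I would invoke strict monotonicity of $C$. For any $A\in\F$ with $\P(A)>0$ we have $\ind_A\in L^\infty_+\setminus\{\vec 0\}$, so taking $\delta=\ind_A$ in Theorem~\ref{thm:properties}\eqref{convex-monotone} gives $C(-\ind_A)<C(\vec 0)=0$; evaluating the nondecreasing secant at $t=-1$ then yields $P^b(\ind_A)\ge -C(-\ind_A)>0$. Combining with the sandwich already established for the chosen $\Q$, $\Q(A)=\E^\Q[\ind_A]\ge P^b(\ind_A)>0$. Thus $\P(A)>0$ implies $\Q(A)>0$, which together with $\Q\ll\P$ is precisely $\Q\sim\P$.

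The delicate step is the representation: guaranteeing that the dominating functional is a genuinely countably additive probability measure with an $L^1(\P)$ density, rather than a merely finitely additive element of $(L^\infty)^*$, and that the representing family is nonempty. This is exactly where the weak* upper semicontinuity axiom (Definition~\ref{defn:amm}\eqref{defn:amm-1}), propagated to weak* lower semicontinuity of $C$ and hence of $P^a$, is indispensable: it is the Fatou property that makes the representation theorem applicable. The remaining two claims (existence of the directional derivatives with $P^b\le P^a$, and the equivalence $\Q\sim\P$) are then routine consequences of convexity and strict monotonicity, respectively.
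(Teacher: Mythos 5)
Your overall route is the same as the paper's: establish $P^b\le P^a$ from convexity of $C$, recognize $\rho(x):=P^a(-x)$ as a coherent risk measure with the Fatou property, invoke the dual representation to obtain representing probability measures absolutely continuous with respect to $\P$, and use strict monotonicity of $C$ (relevance) to force equivalence. Your first step (monotone secant slopes plus the Lipschitz bound of Corollary~\ref{cor:lipschitz}) and your last step (the bound $\Q(A)\ge P^b(\ind_A)\ge -C(-\ind_A)>0$ for every $A\in\F_+$) are correct, and are argued more carefully than in the paper, which handles relevance in a footnote via $P^a(\ind_A)=1-P^b(\ind_{A^c})>0$ and calls the rest straightforward.

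The gap is at precisely the step you flag as delicate, and your justification of it is wrong in a way that cannot be patched. Since the secant slope $t\mapsto C(tx)/t$ is nondecreasing, the limit defining $P^a$ is a \emph{decreasing} limit: $P^a=\inf_{t>0}\frac1t C(t\cdot)$, not $\sup_{t>0}\frac1t C(t\cdot)$. An infimum of weak* lower semicontinuous functions need not be weak* lower semicontinuous, so the Fatou property of $\rho$ --- the exact hypothesis that yields countably additive representing measures rather than merely finitely additive elements of $(L^\infty)^*$ --- does not follow from your argument. Moreover, this is not a routine verification that can be supplied by other means: it can genuinely fail under the paper's axioms. Take $\Omega=\mathbb{N}$, $\P(\{k\})=2^{-k}$, a density $z>0$ with $\E[z]=1$, and the strictly increasing, concave, weak* upper semicontinuous function $u_0(y)=\frac12\E[zy]+\frac12\inf_k\bigl\{y(k)+\frac1k\bigr\}$; feeding $u_0$ into the construction of Example~\ref{ex:essinf} (which the paper asserts produces an LBAMM satisfying Definition~\ref{defn:amm}) gives $\bar u_\epsilon=(1-\epsilon)u_0+\epsilon\log\essinf$. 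A direct computation of the indifference cost at $\Pi=\vec{1}$ shows that for small bets $tx$ one has $\frac1t C(tx)\to(1-\epsilon)\bigl(\frac12\E[zx]+\frac12\sup_k\{x(k)-\frac1{tk}\}\bigr)+\epsilon\esssup x$, whence $P^a(x)=(1-\epsilon)\bigl(\frac12\E[zx]+\frac12\limsup_{k\to\infty}x(k)\bigr)+\epsilon\esssup x$; the $\limsup$ term is a purely finitely additive functional. If a countably additive probability measure $\Q$ satisfied $\E^\Q[x]\le P^a(x)$ for all $x$, then testing on $x=\ind_{\{1,\dots,n\}}$ (for which the $\limsup$ term vanishes) and letting $n\to\infty$ would give $1=\Q(\mathbb{N})=\lim_n\Q(\{1,\dots,n\})\le\lim_n P^a(\ind_{\{1,\dots,n\}})=\frac{1+\epsilon}{2}<1$, a contradiction. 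To be fair, the paper's own proof has the same hole --- it declares $\rho$ to be a ``lower semicontinuous coherent risk measure'' as straightforward to verify --- but your proposed proof of that assertion is incorrect, and the assertion itself is the crux: it requires either additional hypotheses (e.g., Fr\'echet differentiability of $u$, or a Lebesgue-type continuity of $C$) or an argument exploiting structure well beyond Definition~\ref{defn:amm}, neither of which you (or the paper) supply.
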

\begin{proof}
Since $C$ is convex by Theorem~\ref{thm:properties}, it is straightforward to verify that $P^a, P^b: L^\infty \to \R$ are well-defined and $\rho(x):=P^a(-x)$ is a lower semicontinuous coherent risk measure. 
In fact, following the nomenclature of \cite{delbaen2002coherent}, $\rho$ is also relevant as $P^a(\ind_A) > 0$ for every $A \in \F_+$ (i.e., $A \in \F$ with $\P(A) > 0$).\footnote{Trivially $P^a(\ind_A) = 1 - P^b(\ind_{A^c}) > 0$ for any $A \in \F_+$ as $P^b(\ind_{A^c}) \leq P^a(\ind_{A^c}) \leq C(\ind_{A^c}) < 1$ by no arbitrage and convexity (Theorem~\ref{thm:properties}).}
By the dual representation of coherent risk measures (see, e.g.,~\cite[Corollary 4.34]{follmer2008stochastic}) , there exists a set of probability measures $\mathcal{Q} \ne \emptyset$ which are equivalent to 
 $\P$, i.e., $\mathcal{Q}\subseteq \{\Q\sim\P\}$, such that
$P^a(x) = \sup \bigl\{\E^\Q[x]\;|\;\Q\in \mathcal{Q}\bigr\}$. Moreover, by construction, $P^b(x)= \inf \bigl\{\E^\Q[x]\;|\;\Q\in \mathcal{Q}\bigr\}$. We thus conclude that $P^a(x)\geq \E^\Q[x]\geq P^b(x)$ for all $x\in L^\infty$ and all $\Q\in \mathcal{Q}$.
\end{proof}

\begin{remark}
From Theorem~\ref{thm:prices}, we note that:
\begin{enumerate} 
\item Theorem~\ref{thm:prices} guarantees the existence of an equivalent \emph{pricing measure} $\Q \sim \P$ that sits within the bid-ask spread, i.e., $P^b(x) \leq \E^\Q[x] \leq P^a(x)$ for any $x \in L^\infty$. We remind the reader that this equivalent pricing measure $\Q$ implicitly depends on the remaining liquidity $\Pi \in \domu$. We regard any such measure as a pricing measure as it is consistent with the quoted (bid and ask) prices in the market. 
\item The pricing oracle $P^a$ is, in fact, the largest coherent risk measure that is dominated by the cost function $C$. 
A similar statement can be given for the pricing oracle $P^b$ on the negative of a bet.
\item If $C$ is differentiable at 0, then there is no bid-ask spread, and $P^a(x)=P^b(x)$ for every $x \in L^\infty$. This differentiability follows if $u$ is Fr\'echet differentiable; however, we note that the construction in Example~\ref{ex:essinf} is \emph{not} Fr\'echet differentiable if $\epsilon > \inf_{A \in \F_+} \P(A)$.
\end{enumerate}
\end{remark}

We conclude our discussion of the properties of the LBAMM by considering the outcome of a bet from an infinitely liquid and risk-neutral bettor with beliefs $\Q \ll \P$, i.e., who is solving
\begin{equation}\label{eq:max-bet}
\sup_{x \in L^\infty} \E^\Q[x - C(x;\Pi)\vec{1}].
\end{equation}
Such a result serves as a converse to Theorem~\ref{thm:prices} as this bettor has a (finite) optimal bet $x^* \in L^\infty$ if, and only if, $P^b(x;\Pi - x^* + C(x^*;\Pi)\vec{1}) \leq \E^\Q[x] \leq P^a(x;\Pi - x^* + C(x^*;\Pi)\vec{1})$ for any $x \in L^\infty$ so that $\Q$ is sandwiched by $P^b,P^a$.
Fundamentally, this sandwich property implies an LBAMM is able to extract the subjective beliefs (i.e., $\Q$) of any risk-neutral bettor. Such a property was first considered within \cite{hanson2007logarithmic} in which the Hanson's logarithmic scoring rule in a finite probability space was proven to ``extract the information implicit in the trades others make with it, in order to infer new rational prices.''
\begin{theorem}\label{thm:max-bet}
Let $u: \domu \to \R$ be an LBAMM with remaining liquidity $\Pi \in \domu$. 
Consider a risk-neutral bettor with subjective measure $\Q \ll \P$, optimizing~\eqref{eq:max-bet}.
For any $x^* \in \argmax_{x \in L^\infty} \E^\Q[x - C(x;\Pi)\vec{1}]$, it holds that $$P^b(x;\Pi - x^* + C(x^*;\Pi)\vec{1}) \leq \E^\Q[x] \leq P^a(x;\Pi - x^* + C(x^*;\Pi)\vec{1}).$$
Furthermore, if $Q \in \M_* := \{\Q \ll \P \; | \; \essinf \frac{d\Q}{d\P} > 0\}$ then $\argmax_{x \in L^\infty} \E^\Q[x - C(x;\Pi)\vec{1}] \neq \emptyset$. 
\end{theorem}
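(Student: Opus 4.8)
The plan is to treat the two claims separately. For the first (the sandwich, i.e.\ first-order, condition) I would exploit path independence, Theorem~\ref{thm:properties}\eqref{split}, to linearize the objective around the optimizer. Writing $\Pi^* := \Pi - x^* + C(x^*;\Pi)\vec{1}$ for the post-bet liquidity, path independence gives $C(x^* + tx;\Pi) = C(x^*;\Pi) + C(tx;\Pi^*)$, so that the objective at the perturbation $x^* + tx$ becomes
\[
\E^\Q[(x^*+tx) - C(x^*+tx;\Pi)\vec{1}] = \E^\Q[x^* - C(x^*;\Pi)\vec{1}] + t\,\E^\Q[x] - C(tx;\Pi^*),
\]
using that $C(\cdot;\cdot)$ is real-valued. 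Optimality of $x^*$ forces the right-hand side to be maximal at $t=0$, i.e.\ $t\,\E^\Q[x] \le C(tx;\Pi^*)$ for every $t \in \R$ and every $x \in L^\infty$. Dividing by $t>0$ and letting $t \searrow 0$ yields $\E^\Q[x] \le P^a(x;\Pi^*)$; dividing by $t<0$ (which reverses the inequality) and letting $t \nearrow 0$ yields $\E^\Q[x] \ge P^b(x;\Pi^*)$. This is exactly the asserted sandwich, for arbitrary $x$.

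For the existence claim I would first record the structural features of the objective $F(x) := \E^\Q[x] - C(x;\Pi)$. Since $C$ is convex and weak* lower semicontinuous by Theorem~\ref{thm:properties}\eqref{convex-monotone}, and $x \mapsto \E^\Q[x] = \E[x\,\tfrac{d\Q}{d\P}]$ is weak* continuous (as $\tfrac{d\Q}{d\P} \in L^1$), $F$ is concave and weak* upper semicontinuous; moreover translativity of $C$ makes $F$ invariant under constants, $F(x + c\vec{1}) = F(x)$, so the supremum may be sought modulo the direction $\vec{1}$. Writing $\A := \{y \in L^\infty : C(y;\Pi) \le 0\} = \{y : u(\Pi - y) \ge u(\Pi)\}$ for the acceptance set, a standard conjugacy computation gives $\sup_x F(x) = \sup_{y \in \A}\E^\Q[y]$. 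Because any $y \in \A$ satisfies $\Pi - y \in \domu$ by Theorem~\ref{thm:properties}\eqref{equal}, we have $y < \Pi$ a.s., hence $\E^\Q[y] \le \esssup \Pi < \infty$; thus the value is finite (independently of $\Q$), and $0 \in \A$ shows it is nonnegative.

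It remains to produce a maximizer, and this is where the hypothesis $\Q \in \M_*$, i.e.\ $\essinf \tfrac{d\Q}{d\P} =: m > 0$, enters decisively. For $y \in \A$ we have $y \le \esssup\Pi =: M$, so $\E^\Q[y] \le M - \E^\Q[y^-] \le M - m\,\E[y^-]$; along any maximizing sequence $y_n$ (for which $\E^\Q[y_n]$ is bounded below, the value being nonnegative) this forces a uniform $L^1$ bound $\E[y_n^-] \le M/m$. Combined with the a.s.\ upper bound $y_n \le M$, I would pass to forward convex combinations (a Koml\'os/Delbaen--Schachermayer argument), which remain maximizing by concavity and remain in the convex set $\A$, to extract an a.s.\ limit $y_\infty$; closedness of $\A$ (from upper semicontinuity of $u$, Definition~\ref{defn:amm}\eqref{defn:amm-1}) would retain $y_\infty \in \A$ and weak* upper semicontinuity of $F$ would identify $y_\infty$ (normalized so $C(y_\infty;\Pi)=0$) as the optimizer. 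The main obstacle is precisely this compactness step: the $L^1$ control on the negative parts does not by itself give norm-boundedness in $L^\infty$, so one must rule out that the maximizing sequence escapes to an unbounded-below (hence inadmissible) limit. The resolution is that $\essinf\tfrac{d\Q}{d\P}>0$ makes deep negative excursions strictly costly under $\Q$ while concavity of $u$ caps the utility they can buy — heuristically, the first-order condition forces the marginal utility at $\Pi - x^*$ to be bounded below, hence $\Pi - x^*$ bounded above and $x^*$ bounded below — and converting this coercivity into the clean uniform lower bound that upgrades the $L^1$ estimate to genuine attainment is the crux of the argument.
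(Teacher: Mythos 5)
Your treatment of the sandwich property is correct and is essentially the paper's own argument: both combine optimality of $x^*$ with path independence (Theorem~\ref{thm:properties}\eqref{split}); the paper phrases the conclusion as $\frac{d\Q}{d\P} \in \partial C(\vec{0};\Pi-x^*+C(x^*;\Pi)\vec{1})$, while you take one-sided directional derivatives directly, which amounts to the same computation.

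The existence claim is where you have a genuine gap, and it is precisely the one you concede. Your reduction $\sup_{x \in L^\infty}\E^\Q[x - C(x;\Pi)\vec{1}] = \sup_{y \in \A}\E^\Q[y]$ and your a priori bounds ($y \leq \esssup\Pi$ a.s., and $\E[y_n^-] \leq \esssup\Pi / \essinf\tfrac{d\Q}{d\P}$ along a maximizing sequence) are correct, but the Koml\'os step cannot close the argument. The a.s.\ limit $y_\infty$ of forward convex combinations satisfies $y_\infty \leq \esssup\Pi$ but may be unbounded below, i.e., it may leave $L^\infty$ entirely; weak* closedness of the acceptance set, which is a subset of $L^\infty$, says nothing about a.s.\ limits that exit the space. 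Truncation from below does not repair this: replacing $y$ by $y \vee (-K)$ raises the objective but raises $y$, hence lowers $\Pi - y$, so by monotonicity of $u$ the truncated bet can leave the acceptance set. And the heuristic you offer as the resolution --- a lower bound on the marginal utility of $u$ at $\Pi - x^*$ forcing $x^*$ to be bounded below --- presupposes concavity and differentiability of $u$, whereas an LBAMM is only quasiconcave and weak* upper semicontinuous (Definition~\ref{defn:amm}); there is no marginal utility available to run that argument. So the coercivity needed for the direct method is never established.

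The paper avoids compactness altogether and produces the maximizer by convex duality. It forms the weak* closed convex set $\B := \{x \in L^\infty \; | \; \Pi - x \in \A\}$ and its support function $\delta^*(y) := \sup_{x \in \B}\E[yx]$ on $L^1$, shows (by the same finiteness computation you perform, using $\A \subseteq L^\infty_+$) that $\delta^*$ is finite on all of $L^1_+$, asserts that consequently $\frac{d\Q}{d\P} \in L^1_* := \{y \in L^1_+ \; | \; \essinf y > 0\}$ lies in $\operatorname{int}\dom\delta^*$ so that $\partial\delta^*(\tfrac{d\Q}{d\P}) \neq \emptyset$, and then invokes Fenchel--Young: any $x^* \in \partial\delta^*(\tfrac{d\Q}{d\P})$ lies in $\B$ and attains $\sup_{x \in \B}\E^\Q[x]$, hence maximizes \eqref{eq:max-bet}. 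The maximizer is thus manufactured as a subgradient --- automatically an element of $(L^1)^* = L^\infty$ --- rather than extracted as a limit of a maximizing sequence; this substitution of subdifferentiability for compactness is exactly the device your argument is missing, and it is where the hypothesis $\essinf\tfrac{d\Q}{d\P} > 0$ enters. That said, your instinct that attainment is the crux is well founded: in the paper's proof all of the difficulty concentrates in the interiority claim $L^1_* \subseteq \operatorname{int}\dom\delta^*$, and this step itself warrants care on non-atomic probability spaces, where $\dom\delta^* = L^1_+$ has empty norm interior and subdifferentiability of $\delta^*$ at $\frac{d\Q}{d\P}$ is equivalent to the attainment being proven; on finite probability spaces the interiority claim, and hence the paper's argument, goes through as stated.
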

\begin{proof}
By definition of the pricing oracles $P^a,P^b$ as directional derivatives, the sandwich property holds if $\frac{d\Q}{d\P} \in \partial C(\vec{0};\Pi-x^*+C(x^*;\Pi)\vec{1})$ for $x^* \in \argmax_{x \in L^\infty} \E^\Q[x - C(x;\Pi)\vec{1}]$ where $\partial C$ denotes the subdifferential of $C$.
By construction of $x^*$ as a maximizer, $\E^\Q[x^* - C(x^*;\Pi)\vec{1}] \geq \E^\Q[x - C(x;\Pi)\vec{1}]$ for any $x \in L^\infty$. Rearranging terms, we recover $C(x;\Pi) - C(x^*;\Pi) \geq \E^\Q[x - x^*]$ for any $x \in L^\infty$, i.e., $\frac{d\Q}{d\P} \in \partial C(x^*;\Pi)$.
Utilizing path independence (Theorem~\ref{thm:properties}\eqref{split}), it follows that $\vec{0} \in \argmax_{x \in L^\infty} \E^\Q[x - C(x;\Pi - x^* + C(x^*;\Pi)\vec{1})\vec{1}]$, i.e., $\frac{d\Q}{d\P} \in \partial C(\vec{0};\Pi - x^* + C(x^*;\Pi)\vec{1})$.

It remains to show that $\argmax_{x \in L^\infty} \E^\Q[x - C(x;\Pi)\vec{1}] \neq \emptyset$ for $\Q \in \M_*$. 
Let $\A := \{z \in L^\infty \; | \; u(z) \geq u(\Pi)\} \subseteq L^\infty_+$ be the superlevel set at $u(\Pi)$. By quasiconcavity and upper-semicontinuity, $\A$ is a weak* closed and convex set.
Fix $y \in L^1_+$, then
\begin{align*}
\sup_{x \in L^\infty} \E[y(x - C(x;\Pi)\vec{1})] &= \sup_{x \in L^\infty} \{\E[yx] \; | \; u(\Pi-x) = u(\Pi)\} = \sup_{x \in L^\infty} \{\E[yx] \; | \; u(\Pi-x) \geq u(\Pi)\}\\
    &= \sup_{x \in L^\infty} \{\E[yx] \; | \; \Pi - x \in \A\} = \E[y\Pi] - \inf_{x \in \A} \E[yx].
\end{align*}
By $\A \subseteq L^\infty_+$, it follows that $\inf_{x \in \A} \E[yx] \geq \inf_{x \in L^\infty_+} \E[yx] = 0$. As a direct consequence, we also have that $\sup_{x \in L^\infty} \E[y(x - C(x;\Pi)\vec{1})] < \infty$ for any $y \in L^1_+$.
For ease of notation, define $\B := \{x \in L^\infty \; | \; \Pi - x \in \A\}$ is weak* closed and convex.
Further, we can define the associated indicator function $\delta(x) := \begin{cases} 0 &\text{if } x \in \B \\ \infty &\text{if } x \not\in \B\end{cases}$ (proper, weak* lower semicontinuous, and convex) and support function $\delta^*(y) := \sup_{x \in \B} \E[yx]$ (proper, weak* lower semicontinuous, and convex) for $y \in L^1$. By the prior finiteness of the supremum result, $\dom \delta^* \supseteq L^1_+$ and, in particular, $\operatorname{int}\dom\delta^* \supseteq L^1_* := \{y \in L^1_+ \; | \; \essinf y > 0\}$.
Therefore $\partial\delta^*(\frac{d\Q}{d\P}) \neq \emptyset$ since $\frac{d\Q}{d\P} \in \operatorname{int}\dom\delta^*$.
By the Fenchel-Young inequality, any subdifferential $x^* \in \partial\delta^*(\frac{d\Q}{d\P})$ is a maximizer of $\sup_{x \in L^\infty} \E^\Q[x - C(x;\Pi)\vec{1}]$ and the result follows.
\end{proof}

\section{Decentralized liquidity pooling}\label{sec:pooling}

Thus far within this work we have formally introduced the AMMs for prediction markets. Such structures as previously studied (in, e.g.,~\cite{chen2012utility,othman2012profit,othman2013practical}) are considered with fixed available liquidity with a central operator running the AMM. Herein we wish to explore aspects of a decentralized AMM for prediction markets. Specifically, we consider an AMM to be \emph{decentralized} if the liquidity pool is comprised of investments by diverse individuals and entities; furthermore, these investors can both add or remove liquidity at any time prior to the realization of the random event $\omega$, even after some bets have already been placed.

First, we wish to note that investing as a liquidity provider is ``simple'' prior to the opening of the market to trades.  Specifically, if an investor provides $\ell > 0$ of liquidity to the market, then they receive a payout at the conclusion of the market equal to a fraction $\ell/L$ (where $L > 0$ is the total initial market liquidity) of $\Pi(\omega)$. That is, liquidity provision is a bet with payoff dependent on the state of the market.

Second, if an investor wishes to either add or remove liquidity after the market has opened to trades then we take the idea that these are special kinds of bets that adjust dynamically with the state of the market.  We will focus our discussion on the liquidity provision case as selling a liquidity position acts similarly.
Key to the construction of this special trade is that providing liquidity to the market should reduce the cost of trading for any counterparty.  As such, we follow the idea from, e.g.,~\cite{bichuch2022axioms} that pooling must be taken so that the pricing oracles are unaffected. We will, however, consider this trade in a generic manner first and then propose the specific structure for the constancy of the pricing oracles.

Briefly, let $\ell \in L^\infty_+ \backslash \{\vec{0}\}$ be a position that the liquidity provider is willing to sell in exchange for a fraction $\alpha > 0$ of the liquidity remaining $\Pi$. In particular, the investor wishes to guarantee that this fraction of liquidity remaining is accurate \emph{after} she places her own bet, i.e., $x = \alpha (\Pi - x + \ell)$ for some $\alpha > 0$; recall that $\Pi-x+\ell$ is the liquidity that remains in the AMM after the liquidity provision $\ell$ is added, but after the bet $x$ is placed, where the prior liquidity position is $\Pi$.
Solving for $x = \frac{\alpha}{1+\alpha}(\Pi + \ell)$, we solve the inverse problem to determine the fraction of the liquidity that is being purchased, i.e., $0 = C\left(\frac{\alpha}{1+\alpha}(\Pi+\ell)-\ell;\Pi\right)$.
In other words, prior to the market opening, the payoff for investors who provide funding to the AMM was simply the proportion of reserves that they provided. In contrast, after the market opens and bets have been placed, this is no longer as straightforward.  Liquidity providers are still entitled to some portion of the terminal wealth that remains at the AMM (i.e., a proportion of $\Pi$), but that portion may no longer be a simple proportion $\ell/\Pi$ as this may no longer even be deterministic. Instead, the aforementioned calculation determines the deterministic proportion $\alpha$ that will be assigned to the liquidity provider.
As will be discussed in greater detail below, the payout from this liquidity position will dynamically adjust along with the liquidity of the AMM as new bets are made; the payoff of this bet is the $\alpha$ fraction of the final liquidity $\bar\Pi \in \domu$ after all bets are placed. It is this dynamic adjustment of the payout which makes this bet a liquidity provision, i.e., the dynamic adjustment takes the opposite position to any new incoming bet thereby decreasing price impacts and increasing market liquidity. 
\begin{proposition}\label{prop:pooling-cost}
Fix the pool size $\Pi \in \domu$ and new liquidity provision $\ell \in L^\infty_+ \backslash \{\vec{0}\}$. There exists a unique fraction of the pool size purchased $\alpha^*(\Pi,\ell) > 0$ such that 
\begin{equation}\label{eq:pooling}
0 = \bar C(\alpha^*(\Pi,\ell)) := C\left(\frac{\alpha^*(\Pi,\ell)}{1+\alpha^*(\Pi,\ell)} \Pi - \frac{1}{1+\alpha^*(\Pi,\ell)}\ell;\Pi\right). 
\end{equation}
Similarly, there exists a unique fraction of the pool size sold at $\alpha^*(\Pi,\ell) \in (-1,0)$ satisfying \eqref{eq:pooling} for $\ell \in L^\infty_- \backslash \{\vec{0}\}$ with $\Pi + \ell \in \domu$.
\end{proposition}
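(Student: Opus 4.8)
The plan is to reduce both existence-and-uniqueness claims to the scalar intermediate value theorem applied to the map $\alpha \mapsto \bar C(\alpha)$, once I have shown this map is continuous and strictly increasing. The first step is to rewrite the argument of the cost function as $g(\alpha) := \frac{\alpha}{1+\alpha}\Pi - \frac{1}{1+\alpha}\ell = \frac{\alpha\Pi - \ell}{1+\alpha}$, viewed as a map from $(-1,\infty)$ into $L^\infty$, so that $\bar C(\alpha) = C(g(\alpha);\Pi)$, and to treat the purchase regime ($\ell \in L^\infty_+ \backslash \{\vec{0}\}$) and the sale regime ($\ell \in L^\infty_-\backslash\{\vec{0}\}$ with $\Pi+\ell \in \domu$) in parallel.

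The key structural observation is that $g$ is strictly increasing in $\alpha$ in the almost-sure order. Differentiating pointwise gives $g'(\alpha) = \frac{\Pi+\ell}{(1+\alpha)^2}$, and $\Pi + \ell > 0$ a.s. in \emph{both} regimes: in the purchase case $\ell \geq \vec{0}$ forces $\Pi + \ell \geq \Pi > 0$, while in the sale case the hypothesis $\Pi + \ell \in \domu$ gives $\essinf(\Pi+\ell) > 0$. Hence for $\alpha_1 < \alpha_2$ the fundamental theorem of calculus yields $g(\alpha_2)-g(\alpha_1) = \int_{\alpha_1}^{\alpha_2}\frac{\Pi+\ell}{(1+s)^2}\,ds > 0$ a.s., and feeding this into the strict monotonicity of $C$ from Theorem~\ref{thm:properties}\eqref{convex-monotone} shows $\bar C$ is strictly increasing. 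Continuity of $\bar C$ then follows by combining the $\|\cdot\|_\infty$-continuity of $\alpha \mapsto g(\alpha)$ (a rational function of $\alpha$ with $L^\infty$ coefficients, no pole on $(-1,\infty)$) with the fact that $C$ is $1$-Lipschitz in the bet size, Corollary~\ref{cor:lipschitz}.

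It then remains to identify the boundary behavior and invoke the intermediate value theorem in each regime. In the purchase case, $g(0) = -\ell \leq \vec{0}$ with $\P(\ell > 0) > 0$, so strict monotonicity together with $C(\vec{0})=0$ gives $\bar C(0) = C(-\ell) < 0$; and since $\|g(\alpha)-\Pi\|_\infty = \|\Pi+\ell\|_\infty/(1+\alpha)\to 0$, Lipschitz continuity gives $\bar C(\alpha)\to C(\Pi;\Pi)>0$ as $\alpha\to\infty$, positivity again following from strict monotonicity because $\Pi > \vec{0}$. A strictly increasing continuous function passing from a negative value to a positive limit has a unique root $\alpha^*(\Pi,\ell)\in(0,\infty)$. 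In the sale case the signs at $\alpha=0$ flip: now $-\ell \geq \vec{0}$ and nonzero, so $\bar C(0) = C(-\ell) > 0$; and writing $g(\alpha) = \Pi - \frac{\Pi+\ell}{1+\alpha}$, the bound $\esssup g(\alpha) \leq \esssup\Pi - \frac{\essinf(\Pi+\ell)}{1+\alpha}\to-\infty$ as $\alpha\searrow -1$ together with no arbitrage $C(x)\leq\esssup x$ (Theorem~\ref{thm:properties}\eqref{bounds}) forces $\bar C(\alpha)\to-\infty$. The intermediate value theorem then gives the unique root in $(-1,0)$.

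The main obstacle I anticipate lies in the two ingredients that license the scalar intermediate value theorem: justifying the passage from the almost-sure inequality $g(\alpha_2) > g(\alpha_1)$ to the strict scalar inequality $\bar C(\alpha_2) > \bar C(\alpha_1)$ through the strict monotonicity axiom, and pinning down the correct limiting values ($C(\Pi;\Pi)$ on one side and $-\infty$ on the other) rather than some spurious finite limit. Once these are in hand, the remainder is the elementary calculus of the rational map $g$ and the already-established properties of $C$.
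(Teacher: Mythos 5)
Your proof is correct and takes essentially the same route as the paper's: continuity of $\bar C$ from the Lipschitz continuity of $C$ (Corollary~\ref{cor:lipschitz}), strict monotonicity of $\bar C$ from the derivative $\frac{1}{(1+\alpha)^2}(\Pi+\ell)$ of the argument combined with strict monotonicity of $C$, and the intermediate value theorem between $\bar C(0)=C(-\ell;\Pi)<0$ and the limit $C(\Pi;\Pi)>0$ as $\alpha\to\infty$. The only addition is that you explicitly carry out the sale case --- whose boundary behavior is genuinely different, namely $\bar C(\alpha)\to-\infty$ as $\alpha \searrow -1$ via the no-arbitrage bound $C(x)\leq\esssup x$, rather than a finite limit --- whereas the paper omits it as ``similar.''
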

\begin{proof}
Let $\Pi \in \domu$ and assume that $\ell \in L^\infty_+ \backslash \{\vec{0}\}$; as the proof for $\ell \in L^\infty_- \backslash \{\vec{0}\}$ with $\Pi + \ell \in \domu$ is similar, we omit it here.
By (strong) continuity of $C$ (see Corollary~\ref{cor:lipschitz}), it immediately follows that $\bar C$ is continuous on $\alpha \in (-1,\infty)$.
We also have that $\bar C(0) = C(-\ell;\Pi)<0$ and $\lim\limits_{\alpha\to\infty} \bar C(\alpha) =  \lim\limits_{\alpha\to\infty} C(\Pi; \Pi)>0.$ The existence of $\alpha^*(\Pi,\ell) > 0$ now follows. Using the fact that $\frac{\d}{\d \alpha} \left(\frac{\alpha}{1+\alpha}\Pi-\frac{1}{1+\alpha}\ell\right) = \frac1{(1+\alpha)^2}(\Pi + \ell) \in \domu$, and therefore $\alpha \mapsto  \frac{\alpha}{1+\alpha}\Pi-\frac{1}{1+\alpha}\ell$ is strictly increasing, and the fact that $C(\cdot)$ is also strictly increasing (see Theorem~\ref{thm:properties}\eqref{convex-monotone}), we get that $\bar C(\cdot) $ is strictly increasing as well. Thus we get the desired uniqueness.
\end{proof}

Notably, and as briefly mentioned above, the bet $\frac{\alpha^*(\Pi,\ell)}{1+\alpha^*(\Pi,\ell)}(\Pi+\ell)$ is made in such a way that it will always rebalance to maintain a payout of $\alpha^*(\Pi,\ell)$ fraction of the remaining liquidity as new bets are made.  Let $\Pi^* := \frac{1}{1+\alpha^*(\Pi,\ell)}(\Pi+\ell)$ be the effective liquidity after pooling the liquidity. When the next trade occurs, the liquidity provider (through, e.g., a smart contract) needs to simultaneously update their position. That is, when a new bet $x \in L^\infty$ is made, the trader will be charged $C(x+y-\alpha\Pi^*;\Pi^*)$ where $y \in \domu$ is the new holdings for the liquidity provider, i.e., $y = \alpha^*(\Pi,\ell)(\Pi^*-x-(y-\alpha^*(\Pi,\ell)\Pi^*)+C(x+y-\alpha^*(\Pi,\ell)\Pi^*;\Pi^*)\vec{1})$ so that the liquidity provider maintains the $\alpha^*(\Pi,\ell)$ fraction of remaining liquidity after the bet occurs.
\begin{proposition}\label{prop:pooling-rebalance}
Fix the pool size $\Pi^* \in \domu$ after a liquidity provision (or withdrawal) of $\alpha^* > -1$ was made.
The cost of trading is provided by the mapping $x \in L^\infty \mapsto (1+\alpha^*) C(\frac{1}{1+\alpha^*}x;\Pi^*)$.
\end{proposition}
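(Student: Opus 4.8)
The plan is to unwind the implicit definition of the rebalanced trade into an explicit self-referential equation for the aggregate position presented to the AMM, and then collapse that equation using the translativity established in Theorem~\ref{thm:properties}\eqref{split}. Write $\alpha := \alpha^*$ for brevity, and recall from the discussion preceding the proposition that the trader's bet $x$ together with the liquidity provider's simultaneous update $y-\alpha\Pi^*$ are submitted to the AMM as the single aggregate bet $z := x + y - \alpha\Pi^*$, for which the trader is charged $C(z;\Pi^*)$. The provider's holding $y$ is pinned down by the requirement that she retain an $\alpha$-fraction of the post-trade liquidity, i.e.\ $y = \alpha(\Pi^* - z + C(z;\Pi^*)\vec{1})$; this is exactly the defining relation stated before the proposition once one substitutes $\Pi^* - z = \Pi^* - x - (y-\alpha\Pi^*)$.

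First I would eliminate $y$. Inserting the relation for $y$ into $z = x + y - \alpha\Pi^*$ gives
\[
z = x + \alpha(\Pi^* - z + C(z;\Pi^*)\vec{1}) - \alpha\Pi^* = x - \alpha z + \alpha\,C(z;\Pi^*)\vec{1},
\]
hence $(1+\alpha)z = x + \alpha\,C(z;\Pi^*)\vec{1}$, or equivalently
\[
z = \frac{1}{1+\alpha}x + \frac{\alpha}{1+\alpha}C(z;\Pi^*)\vec{1}.
\]
Because $C(z;\Pi^*)$ is a scalar and $1+\alpha>0$ (as $\alpha>-1$), any solution is forced to be of the form $z = \frac{1}{1+\alpha}x + t\vec{1}$ with the single real shift $t = \frac{\alpha}{1+\alpha}C(z;\Pi^*)$, so the problem reduces to a scalar fixed point.

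Next I would apply translativity. By Theorem~\ref{thm:properties}\eqref{split}, $C(\frac{1}{1+\alpha}x + t\vec{1};\Pi^*) = C(\frac{1}{1+\alpha}x;\Pi^*) + t$; substituting this back into the expression for $t$ yields the scalar equation $t = \frac{\alpha}{1+\alpha}\bigl(C(\frac{1}{1+\alpha}x;\Pi^*) + t\bigr)$, whose unique solution is $t = \alpha\,C(\frac{1}{1+\alpha}x;\Pi^*)$. This simultaneously establishes existence and uniqueness of the aggregate bet $z$, and applying translativity once more delivers the cost
\[
C(z;\Pi^*) = C(\tfrac{1}{1+\alpha}x;\Pi^*) + t = (1+\alpha)\,C(\tfrac{1}{1+\alpha}x;\Pi^*),
\]
which is precisely the claimed mapping.

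I do not expect a substantive obstacle here: the only delicate feature is the self-referential character of the equation defining $z$, and this dissolves once one observes that $C(z;\Pi^*)$ enters only as a constant shift, on which $C$ acts affinely by translativity. Well-definedness of each term is automatic --- the rescaled bet $\frac{1}{1+\alpha}x$ lies in $L^\infty$, its cost is finite since the infinite-liquidity property of Theorem~\ref{thm:properties}\eqref{equal} makes $C(\cdot;\Pi^*)$ defined on all of $L^\infty$, and $1+\alpha>0$ keeps both the rescaling and the fixed-point coefficient well-posed uniformly for liquidity provision ($\alpha>0$) and withdrawal ($\alpha\in(-1,0)$).
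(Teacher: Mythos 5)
Your proof is correct and follows essentially the same route as the paper: eliminate the provider's rebalancing position from the equilibrium relation, observe that the cost enters only as a constant shift, and collapse the resulting scalar fixed point via the translativity of $C$ from Theorem~\ref{thm:properties}\eqref{split}. The only difference is cosmetic --- you solve for the aggregate bet $z$ and the shift $t$, while the paper solves for $y$ and the cost $\tilde C$ directly --- and your explicit remark on existence and uniqueness of the fixed point is a minor tightening of a step the paper leaves implicit.
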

\begin{proof}
The cost of trading $x$ with the liquidity provision that rebalances to stay in line with the market is given by $\tilde C := C(x+y-\alpha^*\Pi^*;\Pi^*)$ where $y \in L^\infty$ is the rebalancing required for the liquidity provider.  That is, $y$ satisfies the equilibrium $y = \alpha^*(\Pi^* - x - (y-\alpha^*\Pi^*) + C(x+y-\alpha^*\Pi^*;\Pi^*)\vec{1})$. Fixing $\tilde C \in \R$, this implies $y = \alpha^*\Pi^* - \frac{\alpha^*}{1+\alpha^*}x + \frac{\alpha^*}{1+\alpha^*}\tilde C\vec{1}$.  Therefore the result follows from translativity of the cost fuction (see Theorem~\ref{thm:properties}\eqref{split})
\[\tilde C = C\left(x+\left(\alpha^*\Pi^*-\frac{\alpha^*}{1+\alpha^*}x+\frac{\alpha^*}{1+\alpha^*}\tilde C\vec{1}\right)-\alpha^*\Pi^*;\Pi^*\right) = C\left(\frac{1}{1+\alpha^*}x;\Pi^*\right) + \frac{\alpha^*}{1+\alpha^*}\tilde C.\]
\end{proof}

With this foundation for the interaction of the liquidity provision and trades, we now want to specify $\ell \in L^\infty_+ \backslash \{\vec{0}\}$ so that the pricing oracles are kept constant, i.e., $P^b(\cdot;\Pi) = P^b\Big(\cdot;\frac{1}{1+\alpha^*(\Pi,\ell)}(\Pi+\ell)\Big)$ and $P^a(\cdot;\Pi) = P^a\left(\cdot;\frac{1}{1+\alpha^*(\Pi,\ell)}(\Pi+\ell)\right)$. Providing liquidity in this way means that pooling does not influence the pricing measure quoted by the market.
\begin{corollary}\label{cor:pooling-price}
Fix $t > -1$ and let the liquidity provided (or withdrawn) be given by $\ell = t\Pi$.  The fraction of the pool provided (or withdrawn) is given by $\alpha^*(\Pi,t\Pi) = t$ and the pricing oracles $P^b,P^a$ are invariant to the liquidity provision ($t > 0$) or withdrawal ($t \in (-1,0)$). 
\end{corollary}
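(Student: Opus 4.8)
The plan is to verify the claimed value of $\alpha^*(\Pi,t\Pi)$ directly from the defining equation~\eqref{eq:pooling}, and then use the resulting identity $\Pi^* = \Pi$ to make the invariance of the pricing oracles immediate. First I would substitute $\ell = t\Pi$ into the cost equation that characterizes $\alpha^*$. With this choice, the argument of the cost function becomes $\frac{\alpha}{1+\alpha}\Pi - \frac{1}{1+\alpha}(t\Pi) = \frac{\alpha - t}{1+\alpha}\Pi$, which is a scalar multiple of $\Pi$. Setting $\alpha = t$ makes this argument equal to $\vec{0}$, and since $C(\vec{0};\Pi) = 0$ by no arbitrage (Theorem~\ref{thm:properties}\eqref{bounds}, as $\vec{0} = 0\cdot\vec{1}$), we see that $\alpha = t$ solves $\bar C(\alpha) = 0$. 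By the uniqueness established in Proposition~\ref{prop:pooling-cost}, we conclude $\alpha^*(\Pi,t\Pi) = t$.

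The second step is to compute the effective liquidity after pooling. By definition $\Pi^* = \frac{1}{1+\alpha^*(\Pi,t\Pi)}(\Pi + \ell) = \frac{1}{1+t}(\Pi + t\Pi) = \frac{1+t}{1+t}\Pi = \Pi$. Thus the effective pool is exactly unchanged. Since the pricing oracles $P^b(\cdot;\Pi^*)$ and $P^a(\cdot;\Pi^*)$ depend on the remaining liquidity only through $\Pi^*$, and here $\Pi^* = \Pi$, the oracles are trivially invariant: $P^b(\cdot;\Pi) = P^b(\cdot;\Pi^*)$ and $P^a(\cdot;\Pi) = P^a(\cdot;\Pi^*)$, which is precisely the assertion for both the provision case ($t>0$) and the withdrawal case ($t \in (-1,0)$).

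I do not anticipate a genuine obstacle here, as the corollary is engineered so that the proportional liquidity provision $\ell = t\Pi$ leaves the normalized pool fixed. The only point requiring a small amount of care is the withdrawal case $t \in (-1,0)$: one must confirm that Proposition~\ref{prop:pooling-cost} applies, i.e., that $\ell = t\Pi \in L^\infty_- \backslash \{\vec{0}\}$ with $\Pi + \ell = (1+t)\Pi \in \domu$. This holds because $1+t > 0$ guarantees $(1+t)\Pi \in \domu$ whenever $\Pi \in \domu$ (scaling by a positive constant preserves both positivity and the condition $\essinf > 0$), and $t \neq 0$ ensures $\ell \neq \vec{0}$. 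With these verifications the uniqueness statement of Proposition~\ref{prop:pooling-cost} delivers $\alpha^*(\Pi,t\Pi) = t$ in the withdrawal regime exactly as in the provision regime, and the invariance conclusion follows identically.
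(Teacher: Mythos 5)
Your proof is correct and follows essentially the same route as the paper's: substitute $\ell = t\Pi$ into \eqref{eq:pooling}, observe that $\alpha = t$ makes the argument of $C$ vanish so it is a root, invoke the uniqueness from Proposition~\ref{prop:pooling-cost}, and conclude invariance of the oracles since the post-pooling liquidity $\frac{1}{1+t}(\Pi+t\Pi) = \Pi$ is unchanged. The only difference is that you spell out details the paper treats as trivial (the identity $C(\vec{0};\Pi)=0$ and the domain check $(1+t)\Pi \in \domu$ in the withdrawal case), which is a welcome bit of added care but not a different argument.
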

\begin{proof}
Recall $\alpha^*(\Pi,\ell)$ is the unique root of $\bar C(\alpha)$ as defined in~\eqref{eq:pooling}. It trivially follows from that construction that $\alpha^*(\Pi,t\Pi) = t$ is a root of $\bar C$.  Furthermore, and as a direct consequence, the pricing oracles after the liquidity provision or withdrawal are provided by $P^b\left(\cdot;\frac{1}{1+\alpha^*(\Pi,t\Pi)}(\Pi+t\Pi)\right) = P^b(\cdot;\Pi)$ and similarly for $P^a$.
\end{proof}
This swap is a fraction of the current pool for that same fraction of the final pool (along with any collected fees). If no trades occur, then the investor will get their initial investment back exactly.

\begin{remark}
Note that before the first trade is made, the liquidity pool is $\Pi := L\vec{1}$ for initial liquidity $L > 0$.  Therefore, adding fixed liquidity $\ell > 0$ adheres to the proportional rule of Corollary~\ref{cor:pooling-price}.
\end{remark}

It remains to prove that providing liquidity reduces the costs for counterparties to buy or sell bets. Fundamentally, this is what liquidity provision is intended to accomplish as it decreases the price impacts experienced by any bettor. This is proven in the following lemma by noting that $\Pi^* = \Pi$ under the proportional liquidity provision $\ell = t\Pi$.
\begin{lemma}\label{lemma:pool-cash}
Fix the bet $x \in L^\infty$.  The cost of purchasing this bet is nonincreasing in the liquidity provision $\alpha$, i.e., $\alpha \in (-1,\infty) \mapsto (1+\alpha)C(\frac{1}{1+\alpha}x)$ is nonincreasing. 
\end{lemma}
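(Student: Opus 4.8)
The plan is to reduce the claim to the elementary fact that, for a convex function vanishing at the origin, the secant slope measured from the origin is monotone. First I would record the base value $C(\vec{0};\Pi) = 0$, which is immediate from the no-arbitrage bounds of Theorem~\ref{thm:properties}\eqref{bounds} since $\essinf \vec{0} = \esssup \vec{0} = 0$ forces $C(\vec{0};\Pi) \in \{0\}$.

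Next I would introduce the scalar function $\phi(s) := C(sx;\Pi)$ for $s \geq 0$. Since $s \mapsto sx$ is a linear map from $\R$ into $L^\infty$ and $C$ is convex by Theorem~\ref{thm:properties}\eqref{convex-monotone}, the composition $\phi$ is convex on $[0,\infty)$, with $\phi(0) = 0$ from the previous step. I would then change variables via $s = \tfrac{1}{1+\alpha}$, which carries the target map $\alpha \mapsto (1+\alpha)C(\tfrac{1}{1+\alpha}x;\Pi)$ into $h(s) := \tfrac{1}{s}\phi(s)$. As $\alpha$ ranges over $(-1,\infty)$, the variable $s$ ranges over $(0,\infty)$, and crucially $\alpha$ \emph{increasing} corresponds to $s$ \emph{decreasing}.

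The core step is to show $h(s) = \phi(s)/s$ is nondecreasing on $(0,\infty)$. For $0 < s_1 < s_2$, writing $s_1 = \tfrac{s_1}{s_2}\,s_2 + \bigl(1-\tfrac{s_1}{s_2}\bigr)\cdot 0$ and applying convexity of $\phi$ together with $\phi(0)=0$ gives $\phi(s_1) \leq \tfrac{s_1}{s_2}\phi(s_2)$; dividing by $s_1 > 0$ yields $\tfrac{\phi(s_1)}{s_1} \leq \tfrac{\phi(s_2)}{s_2}$, i.e.\ $h(s_1) \leq h(s_2)$. Finally I would translate back: since $\alpha \mapsto s = \tfrac{1}{1+\alpha}$ is strictly decreasing and $h$ is nondecreasing in $s$, the composite $\alpha \mapsto h\bigl(\tfrac{1}{1+\alpha}\bigr) = (1+\alpha)C(\tfrac{1}{1+\alpha}x;\Pi)$ is nonincreasing in $\alpha$, which is exactly the assertion.

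I do not anticipate a genuine obstacle: the only points requiring care are confirming $C(\vec{0};\Pi)=0$ so that the secant-slope argument applies, and tracking the orientation reversal in the substitution $s = \tfrac{1}{1+\alpha}$ so that the direction of monotonicity comes out correctly. The remark preceding the lemma---that $\Pi^* = \Pi$ under the proportional provision $\ell = t\Pi$ by Corollary~\ref{cor:pooling-price}, together with the cost formula of Proposition~\ref{prop:pooling-rebalance}---is what lets us read this scalar monotonicity statement as the desired economic conclusion that added liquidity reduces the cost to counterparties.
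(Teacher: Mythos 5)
Your proposal is correct and is essentially the paper's own argument: both rest on convexity of $C$ (Theorem~\ref{thm:properties}\eqref{convex-monotone}) together with $C(\vec{0};\Pi)=0$, applied through the inequality $C(\lambda y;\Pi)\leq\lambda C(y;\Pi)$ with $\lambda=\tfrac{1+\alpha_1}{1+\alpha_2}$, which is exactly your secant-slope step after the substitution $s=\tfrac{1}{1+\alpha}$. The only difference is presentational---you reduce to the scalar function $\phi(s)=C(sx;\Pi)$ while the paper manipulates $C$ on the scaled bets directly---so no further comparison is needed.
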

\begin{proof}
We use the convexity of $C$ from Theorem~\ref{thm:properties}\eqref{convex-monotone} to demonstrate that the function $\alpha \in (-1,\infty) \mapsto (1+\alpha)C(\frac{x}{1+\alpha})$ is nonincreasing. Let $-1<\alpha_1<\alpha_2$, and define $\lambda:=\frac{1+\alpha_1}{1+\alpha_2} \in (0,1)$. From the convexity of $C$ and $C(0) = 0$, the desired result follows from
\begin{align*}
(1+\alpha_2)C\left(\frac{x}{1+\alpha_2}\right) &= (1+\alpha_2)C\left(\lambda(\frac{x}{1+\alpha_1})\right) \leq \lambda (1+\alpha_2) C\left(\frac{x}{1+\alpha_1}\right) = (1+\alpha_1) C\left(\frac{x}{1+\alpha_1}\right).
\end{align*}
\end{proof}

\begin{remark}
Before concluding this section, we want to briefly discuss the provision of fixed liquidity, i.e., $\ell = \bar\ell \vec{1}$ for some $\bar\ell > 0$. 
Though tempting, as this is an injection of (a fixed quantity of) cash to the market, this provision will (generally) alter the pricing oracles. In modifying the pricing oracles, it can be that a bet $x \in L^\infty$ becomes more expensive after the deposit of $\bar\ell > 0$ than before.
This is counter to the notion of a liquidity provision; we further note that cryptocurrency AMM markets explicitly accept liquidity only so that the pricing oracles are kept invariant to the change in pool sizes~\cite{bichuch2022axioms}.
\end{remark}

\section{Fee structure}\label{sec:fees}

Throughout this work we have considered the market making problem without any explicit fees that are being collected by the market maker.\footnote{Previously we have only considered implicit fees through the bid-ask spread (as in \cite{ostrovsky2009information}).}  In this section we aim to define a manner in which the AMM can collect fees while retaining the meaningful properties of the no-fee setting as encoded in Section~\ref{sec:amm-properties} above.

Conceptually, due to the translativity of the cost function $C$, the AMM is indifferent to any almost sure bet. As such, we propose that fees are only collected on the (positive) \emph{random} portion $x - \essinf x\vec{1}$ of the bet $x \in L^\infty$.  That is, for the fixed fee level $\gamma \geq 0$, the cost of purchasing $x \in L^\infty$ in the pool $\Pi \in \domu$ is given by
\begin{equation}\label{eq:fees}
C_\gamma(x;\Pi) := (1+\gamma)C(x-\essinf x\vec{1};\Pi) + \essinf x = (1+\gamma)C(x;\Pi) - \gamma \essinf x.
\end{equation}
In this way, the AMM collects $\gamma C(x-\essinf x\vec{1};\Pi)$ in cash to be paid out to the liquidity providers. This surplus is collected by the AMM poolers in compensation for the liquidity they provide. The no-fee setting corresponds exactly to $\gamma = 0$; as will be discussed below, for a well-functioning market, we will want to cap the fees at $\gamma \leq 1$ (see Remark~\ref{rem:fees-bound}).
We recall from the original setup in \eqref{eq:CI} that the cost function is only applied to the random part of a bet with the essential infimum being accounted for in the liquidity; as such this fee structure matches the logic applied to AMMs generally.
Note that the fees $\gamma$ introduced here are not endogenous to the model, but rather are arbitrarily set. Within the numerical case study of Section~\ref{sec:superbowl-dynamic}, we consider how varying $\gamma$ may alter market dynamics and, therefore also, the total collected fees. 
\begin{remark}
The fee structure~\eqref{eq:fees} corresponds with that of cryptocurrency AMMs as encoded in, e.g.,~\cite{angeris2020improved,lipton2021automated} in which the trader pays (a fixed fraction of) the assets they sell; herein the bettor is ``selling'' cash in exchange for the random payoff. We note that the notation utilized here differs as, traditionally for cryptocurrency AMMs, the inverse $C^{-1}(y;\Pi)$ is considered as the primitive instead.

We further wish to note that collecting fees as a fraction of the purchased bet results in random fees rather than a deterministic amount. Due to this, we opt solely to investigate the assessment of fees on the cost $C$ rather than the size of the bet $x \in L^\infty$.
\end{remark}

\begin{remark}
By defining the fees on the (positive) random portion $x - \essinf x\vec{1}$ of the bet $x \in L^\infty$, we eliminate a possible violation of the law of one price. Specifically, the cost of buying $x \in L^\infty_+$ is identical to the cost of buying the fixed payout of $\esssup x\vec{1}$ and selling $\esssup x\vec{1} - x$. 
Neglecting counterparty risks, the payoff of these bets are functionally identical and, in assessing fees only on the random portion of a bet, so are their costs.
\end{remark}

\begin{proposition}
Let $u: \domu \to \R$ be an LBAMM with remaining liquidity $\Pi \in \domu$ and associated payment function $C: L^\infty \to \R$.  For any fixed bet $x \in L^\infty\backslash\{c\vec{1} \; | \; c \in \R\}$, the mapping $\gamma \in \R_+ \mapsto C_\gamma(x;\Pi)$ is strictly increasing in fee level. Furthermore, $C_\gamma(c\vec{1};\Pi) = c$ for any $c \in \R$ and $\gamma \geq 0$.
\end{proposition}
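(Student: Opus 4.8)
The plan is to prove both claims directly from the explicit formula $C_\gamma(x;\Pi) = (1+\gamma)C(x;\Pi) - \gamma\essinf x$ given in~\eqref{eq:fees}, treating $\gamma$ as the variable while holding $x$ and $\Pi$ fixed. The right-hand side is an affine function of $\gamma$, so everything reduces to identifying the sign of its slope. First I would compute $\frac{\d}{\d\gamma}C_\gamma(x;\Pi) = C(x;\Pi) - \essinf x$, so the mapping $\gamma \mapsto C_\gamma(x;\Pi)$ is strictly increasing precisely when $C(x;\Pi) > \essinf x$, and is constant when $C(x;\Pi) = \essinf x$.

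The crux is therefore to show that $C(x;\Pi) > \essinf x$ whenever $x \notin \{c\vec{1} \mid c \in \R\}$. This is exactly the strict part of the no-arbitrage property in Theorem~\ref{thm:properties}\eqref{bounds}: we have $C(x;\Pi) \in [\essinf x, \esssup x]$ with attainment of either endpoint if and only if $x = c\vec{1}$. Since $x$ is assumed non-constant, the lower endpoint is not attained, giving the strict inequality $C(x;\Pi) > \essinf x$. Combined with the slope computation, this yields strict monotonicity in $\gamma$, establishing the first assertion.

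For the second assertion, I would simply evaluate at $x = c\vec{1}$. Here $\essinf(c\vec{1}) = c$, and by the translativity consequence of path independence (Theorem~\ref{thm:properties}\eqref{split}, the special case $C(c\vec{1};\Pi) = c$), the formula~\eqref{eq:fees} gives $C_\gamma(c\vec{1};\Pi) = (1+\gamma)c - \gamma c = c$ for every $\gamma \geq 0$. This confirms the fee structure leaves deterministic (constant) bets untouched regardless of the fee level, consistent with the stated design goal that fees are levied only on the random component of a bet.

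I do not anticipate any real obstacle here: once the explicit affine-in-$\gamma$ form~\eqref{eq:fees} is in hand, both claims are immediate corollaries of results already proven in Theorem~\ref{thm:properties}. The only point requiring slight care is ensuring the strict inequality $C(x;\Pi) > \essinf x$ (rather than merely $\geq$) for non-constant $x$, which is exactly why the hypothesis $x \in L^\infty\backslash\{c\vec{1} \mid c \in \R\}$ is imposed; this is supplied by the attainment clause of Theorem~\ref{thm:properties}\eqref{bounds}.
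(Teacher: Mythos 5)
Your proposal is correct and matches the paper's own proof: the paper likewise differentiates the affine-in-$\gamma$ formula~\eqref{eq:fees} (written as $\frac{\d}{\d\gamma} C_\gamma(x;\Pi) = C(x-\essinf x\vec{1};\Pi)$, which equals your $C(x;\Pi)-\essinf x$ by translativity) and invokes the attainment clause of Theorem~\ref{thm:properties}\eqref{bounds} to get strict positivity of the slope for non-constant $x$, with the constant-bet case handled directly from the construction. No gaps; the two arguments differ only in which of the two equivalent forms of~\eqref{eq:fees} is differentiated.
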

\begin{proof}
Assume $x \in L^\infty$. By construction, $\frac{\d}{\d\gamma} C_\gamma(x;\Pi) = C(x-\essinf x \vec{1};\Pi)$. By Theorem~\ref{thm:properties}\eqref{bounds}, this is strictly positive so long as $x \not\in \{c\vec{1} \; | \; c \in \R\}$. Furthermore, $C_\gamma(c\vec{1};\Pi) = c$ for any $c \in \R$ by construction for any $\gamma\ge0.$
\end{proof}

Herein, we assume that the collected fees $\gamma C(x-\essinf x \vec{1};\Pi) \geq 0$ are immediately distributed to the liquidity providers. As such, when a bet is placed, the pool still updates from $\Pi \in \domu$ to $\Pi - x + C(x;\Pi)\vec{1} \in \domu$ as in the no-fee setting. By collecting and disbursing fees in this manner, the impact on our discussion of liquidity pooling in Section~\ref{sec:pooling} remains unaffected.

\begin{corollary}\label{cor:fees-properties}
Let $u: \domu \to \R$ be an LBAMM with remaining liquidity $\Pi \in \domu$ and associated payment function $C: L^\infty \to \R$. Fix the fee level $\gamma \geq 0$ and define $C_\gamma: L^\infty \to \R$ as in~\eqref{eq:fees}. It then satisfies the following properties:
\begin{enumerate}
\item \textbf{No arbitrage:} $C_\gamma(x) \in [\essinf x , (1+\gamma)\esssup x - \gamma \essinf x]$; 
\item \textbf{Increasing utility:} $u(\Pi - x + C_\gamma(x)\vec{1}) > u(\Pi)$ for $\gamma > 0$ and $x \in L^\infty \backslash \{c\vec{1} \; | \; c \in \R\}$ with strict inequality becoming an equality if $\gamma = 0$ or $x = c\vec{1}$ for some $c \in \R$;
\item \textbf{Convex and monotone:} $(z,c) \in \{(z,c) \in L^\infty_+ \times \R \; | \; \essinf z = 0\} \mapsto C_\gamma(z+c\vec{1})$ is convex and strictly increasing;\footnote{We interpret $(z,c) \in L^\infty_+ \times \R$ with $\essinf z = 0$ as the random portion $z$ and the constant portion $c$ of a bet. Notably $L^\infty = \{z+c\vec{1} \; | \; z \in L^\infty, c \in \R, \essinf z = 0\}$ so that this domain does not restrict the space of bets under consideration.  } 
\item \textbf{(Lower semi-)Continuous:} $x \mapsto C_\gamma(x)$ is lower semi-continuous in the weak* topology and Lipschitz continuous with Lipschitz constant 1; and
\item \textbf{Path independent:} 
    $C_\gamma(x+y;\Pi) = C_\gamma(x;\Pi) + C_\gamma(y;\Pi - x + C(x;\Pi)\vec{1})$ for $x,y \in L^\infty$ such that $\essinf[x+y] = \essinf x + \essinf y$. As a direct consequence, this path independence applies to splitting a trade $x = \lambda x + (1-\lambda)x$ for any $\lambda \in [0,1]$ and translativity $C_\gamma(x+c\vec{1};\Pi) = C_\gamma(x;\Pi) + c$ for any $x \in L^\infty$ and $c \in \R$.
\end{enumerate}
\end{corollary}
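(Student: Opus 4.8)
The plan is to derive each of the five properties of the fee-adjusted cost function $C_\gamma$ directly from the corresponding property of the no-fee cost function $C$ established in Theorem~\ref{thm:properties} and Corollary~\ref{cor:lipschitz}, exploiting the explicit relation $C_\gamma(x;\Pi) = (1+\gamma)C(x;\Pi) - \gamma\essinf x$ from~\eqref{eq:fees}. The central mechanism is that $C_\gamma$ is an affine reweighting of $C$ that splits any bet into its random part $z := x - \essinf x\,\vec{1}$ and its constant part $c\vec{1} := \essinf x\,\vec{1}$, charging $(1+\gamma)$ times the cost of the random part plus the full constant. Most properties will then follow by inserting the formula and invoking the matching result for $C$.

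First I would handle \textbf{no arbitrage} by evaluating the endpoints: since $C(x) \in [\essinf x, \esssup x]$ by Theorem~\ref{thm:properties}\eqref{bounds}, substituting into $C_\gamma(x) = (1+\gamma)C(x) - \gamma\essinf x$ gives the lower bound $\essinf x$ when $C(x) = \essinf x$ and the upper bound $(1+\gamma)\esssup x - \gamma\essinf x$ when $C(x) = \esssup x$, with monotonicity of the affine map filling in between. For \textbf{increasing utility}, I would write $C_\gamma(x) = C(x) + \gamma\,C(x - \essinf x\,\vec{1})$ (using translativity $C(x) = C(x-\essinf x\,\vec{1}) + \essinf x$), so that $\Pi - x + C_\gamma(x)\vec{1} = [\Pi - x + C(x)\vec{1}] + \gamma\,C(x-\essinf x\,\vec{1})\vec{1}$; since $C(x-\essinf x\,\vec{1}) > 0$ strictly whenever $x \notin \{c\vec{1}\}$ (by no arbitrage, as the random part is nonconstant and nonnegative with positive part), strict monotonicity of $u$ gives $u(\Pi - x + C_\gamma(x)\vec{1}) > u(\Pi - x + C(x)\vec{1}) = u(\Pi)$, with equality exactly when the added term vanishes.

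For \textbf{convexity and monotonicity} on the decomposed domain, I would note that on the set where $\essinf z = 0$ we have $C_\gamma(z+c\vec{1}) = (1+\gamma)C(z) + c$, which is convex in $(z,c)$ because $C$ is convex (Theorem~\ref{thm:properties}\eqref{convex-monotone}) and the map is affine in $c$; strict monotonicity follows from strict monotonicity of $C$ together with the explicit $+c$ term. The \textbf{continuity} claims transfer immediately: lower semi-continuity of $C_\gamma$ follows from lower semi-continuity of $C$ and weak* continuity of the map $x \mapsto \essinf x$ on the relevant domain, while the Lipschitz constant is preserved because on shifts by a constant the $(1+\gamma)$ factor acts on a zero random part—more carefully, Corollary~\ref{cor:lipschitz} combined with translativity gives $|C_\gamma(x) - C_\gamma(y)| \leq \|x-y\|_\infty$ via the same telescoping used there. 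Finally, for \textbf{path independence} under the hypothesis $\essinf[x+y] = \essinf x + \essinf y$, I would substitute the definition into both sides, reducing the claim to the path independence of $C$ (Theorem~\ref{thm:properties}\eqref{split}) applied to the random parts, with the constant parts adding correctly precisely because of the stated essinf-additivity hypothesis.

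The main obstacle is the \textbf{path independence} step, and specifically the role of the hypothesis $\essinf[x+y] = \essinf x + \essinf y$, which does not hold for arbitrary bets. The subtlety is that the fee is assessed on $x - \essinf x\,\vec{1}$, so splitting a bet into two pieces can change the total random portion being charged unless the essential infima add; I would need to verify carefully that under this hypothesis the random part of $x+y$ decomposes compatibly so that the $\gamma$-weighted terms on the two sides match, while using translativity to absorb the constant shifts into the updated pool argument $\Pi - x + C(x;\Pi)\vec{1}$. The corollary's stated consequences—the splitting $x = \lambda x + (1-\lambda)x$ (where the essinf-additivity is automatic) and translativity $C_\gamma(x+c\vec{1}) = C_\gamma(x) + c$—then follow as special cases, the latter because adding a constant leaves the random part unchanged.
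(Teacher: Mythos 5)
Your proposal tracks the paper's proof essentially verbatim on four of the five items: no arbitrage by applying Theorem~\ref{thm:properties}\eqref{bounds} through the affine relation (the paper applies it to the random part $x-\essinf x\vec{1}$, you to $C(x)$ itself --- equivalent via translativity); increasing utility from $C_\gamma(x)=C(x)+\gamma C(x-\essinf x\vec{1})>C(x)$ together with Theorem~\ref{thm:properties}\eqref{equal} and strict monotonicity of $u$; convexity and strict monotonicity from the identity $C_\gamma(z+c\vec{1})=(1+\gamma)C(z)+c$ on the decomposed domain; and path independence by substituting the definition into Theorem~\ref{thm:properties}\eqref{split}, the essinf-additivity hypothesis making the constant parts add. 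On those items your argument is correct and is the paper's argument.

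Both problems are in item 4. First, you justify weak* lower semi-continuity by ``weak* continuity of the map $x\mapsto\essinf x$''; that map is \emph{not} weak* continuous (on $[0,1]$ with Lebesgue measure, $x_n=\vec{1}+\sin(2\pi n\,\cdot)\to\vec{1}$ weak* yet $\essinf x_n\equiv 0\neq 1$). It is only weak* \emph{upper} semi-continuous, being an infimum of the weak* continuous functionals $x\mapsto\E^\Q[x]$ over $\Q\ll\P$; since $\essinf$ enters $C_\gamma=(1+\gamma)C-\gamma\essinf$ with a negative sign, upper semi-continuity is exactly what is needed, and it is what the paper invokes --- your conclusion stands but your stated reason does not. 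Second, the Lipschitz step: the telescoping of Corollary~\ref{cor:lipschitz} requires $C_\gamma(x)\leq C_\gamma(y+\|x-y\|_\infty\vec{1})$, i.e.\ monotonicity of $C_\gamma$ in the a.s.\ order, and this is precisely what $C_\gamma$ lacks: the paper's remark immediately following the corollary exhibits a two-state Uniswap example where $C_\gamma$ is strictly decreasing in a coordinate, and the monotonicity of item 3 holds only for pairs $(z,c)$ with $\essinf z=0$, a structure not preserved when comparing $x$ with $y+\|x-y\|_\infty\vec{1}$. Without monotonicity, the decomposition only yields $|C_\gamma(x)-C_\gamma(y)|\leq(1+\gamma)|C(x)-C(y)|+\gamma|\essinf x-\essinf y|\leq(1+2\gamma)\|x-y\|_\infty$, and this appears sharp: in Example~\ref{ex:uniswap} with $N=2$, taking $\Pi=(\delta,1)$, $y=\vec{0}$, $x=(s,-s)$ with $s\ll\delta\ll 1$, one computes $C_\gamma(x)=(1+\gamma)C\bigl((2s,0)\bigr)-s\approx(1+\gamma)\tfrac{2s}{1+\delta}-s\approx(1+2\gamma)s$ while $\|x-y\|_\infty=s$, so the ratio exceeds $1$ whenever $\gamma>0$. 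Hence ``the same telescoping'' does not go through. In fairness, the paper's own one-line justification (``following the same logic as in Corollary~\ref{cor:lipschitz}'') carries the identical gap, so your blind reconstruction matched the published argument --- but this particular step, as you wrote it, would fail, and any repaired statement needs either a fee-dependent constant or a restriction to the decomposed domain.
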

\begin{proof}
First, recall the definition of $C_\gamma$ from~\eqref{eq:fees}.
\begin{enumerate}
\item The bounds follow directly from Theorem~\ref{thm:properties}\eqref{bounds} applied to the bet $x-\essinf x\vec{1}$.
\item By monotonicity of $u$ and Theorem~\ref{thm:properties}\eqref{equal}, it follows for any $x \in L^\infty \backslash \{c\vec{1} \; | \; c \in \R\}$ and $\gamma > 0$ that
    \[u(\Pi - x + C_\gamma(x)\vec{1}) > u(\Pi - x + C(x)\vec{1}) = u(\Pi).\]
\item Note that $C_\gamma(z+c\vec{1}) = (1+\gamma)C(z) + c$ for any $(z,c) \in L^\infty \times \R$. In particular, this holds for $z \in L^\infty_+$ such that $\essinf z = 0$.
    \begin{enumerate}
    \item Let $(z,c) \gneq (\tilde z,\tilde c)$.  Then, by Theorem~\ref{thm:properties}\eqref{convex-monotone}, $C(z) \geq C(\tilde z)$ and $c \geq \tilde c$ with at least one of the inequalities strict, i.e., $C_\gamma$ has the desired strict monotonicity property. 
    \item Convexity follows directly from the construction of $C_\gamma(z+c\vec{1})$ and convexity of $C$ (see Theorem~\ref{thm:properties}\eqref{convex-monotone}).
    \end{enumerate}
\item Weak* lower semi-continuity follows directly from Theorem~\ref{thm:properties}\eqref{convex-monotone} as the essential infimum is weak* upper semi-continuous. Lipschitz continuity follows following the same logic as in Corollary~\ref{cor:lipschitz} (using the subsequent result on path independence to guarantee translativity).
\item 
    Path independence follows directly from Theorem~\ref{thm:properties}\eqref{split} applied to the definition of $C_\gamma$ under the assumption that $\essinf[x+y] = \essinf x + \essinf y$. The two consequences follow since $\essinf x = \essinf \lambda x + \essinf (1-\lambda) x$ and $\essinf[x+c\vec{1}] = \essinf x + c$ for any $x \in L^\infty$ and $c \in \R$.
\end{enumerate}
\end{proof}

\begin{remark}
\begin{itemize}
\item Note that the monotonicity of $C_\gamma$ with fees $\gamma > 0$ is more delicate than in the no-fee setting. This is due to the way in which fees are collected on the random portion of the bet only. Therefore, the monotonicity of costs need to be assessed separately on the random portion $z \in L^\infty_+$ (with $\essinf z = 0$) and the constant shift $c \in \R$. In fact, as observed by the upper bound on $C_\gamma(x)$, it is possible that there exists some bet $x \in L^\infty$ so that $C_\gamma(x) > \esssup x$ which would violate the naive attempt at monotonicity of $x \mapsto C_\gamma(x)$. For instance, consider the LBAMM from Example~\ref{ex:uniswap} with $N = 2$; if $x(\omega_1) = 0,~x(\omega_2) = 1,~\Pi(\omega_1) > 0,~\Pi(\omega_2) \in (0,\gamma\Pi(\omega_1) + \frac{\gamma}{1+\gamma})$ then $C_\gamma(x;\Pi) > 1 = \esssup x$. In fact, setting $\Pi(\omega_2) = \gamma \Pi(\omega_1)$ in that same example guarantees that $C_\gamma(x;\Pi)$ is (strictly) decreasing as $x(\omega_1) \in [0,1)$ increases. 
\item Though convexity of $C_\gamma$ in Corollary~\ref{cor:fees-properties} was only stated on the domain $\{z+c\vec{1} \; | \; z \in L^\infty, c \in \R, \essinf z = 0\}$, this can readily be shown to be equivalent to convexity of $x \in L^\infty \mapsto C_\gamma(x)$.
\item Note that path independence is now only defined for trades with additive essential infima. 
The general case for path independence, which holds in the no-fee setting as encoded in Theorem~\ref{thm:properties}\eqref{split}, is not desirable when fees are assessed. This becomes clear when considering a round trip trade when $x \in L^\infty_+$ is bought and then subsequently sold; path independence would imply such a trade nets $\$0$ for the liquidity providers which would violate the collection of any fees on the purchase or liquidation of the position.
\end{itemize}
\end{remark}

We wish to conclude our discussion of fees by considering how these fees will be quoted to users through the bid and ask prices, i.e., the modifications to the pricing oracles.  Specifically, we define the bid and ask prices, respectively, as (see \eqref{eq:price})
\begin{align*}
P^a_\gamma(x;\Pi) &:= \lim_{t \searrow 0} \frac{1}{t} C_\gamma(tx;\Pi) = (1+\gamma)P^a(x;\Pi) - \gamma \essinf x, \\
P^b_\gamma(x;\Pi) &:= \lim_{t \nearrow 0} \frac{1}{t} C_\gamma(tx;\Pi) = (1+\gamma)P^b(x;\Pi) - \gamma \esssup x 
\end{align*}
for any $\Pi \in \domu$. That is, buying a marginal unit of a bet with payoff $x$ will have a per unit cost of $P_\gamma^a(x;\Pi) = P^a(x;\Pi)+\gamma (P^a(x;\Pi) - \essinf x)$ which is increasing in $\gamma$. Further, selling a marginal unit of that bet will recover $P_\gamma^b(x;\Pi) = P^b(x;\Pi) - \gamma (\esssup x - P^b(x;\Pi))$ which decreases as $\gamma$ increases.
As expected, $P^b_\gamma(x;\Pi) \leq P^b(x;\Pi) \leq P^a(x;\Pi) \leq P^a_\gamma(x;\Pi)$ for every $x \in L^\infty$ and pool size $\Pi \in \domu$. However, as encoded here, the fees may not apply equally on both sides of the market, i.e., often $P^a_\gamma(x;\Pi) - P^a(x;\Pi) \neq P^b(x;\Pi) - P^b_\gamma(x;\Pi)$. 

\begin{remark}\label{rem:fees-bound}
Noting that a bettor will never elect to buy a bet costing more than its (essential) supremum nor sell to recover less than its (essential) infimum, the pricing oracles have the effective bounds of $P^a(x) \leq \frac{\esssup x + \gamma \essinf x}{1+\gamma}$ and $P^b(x) \geq \frac{\gamma \esssup x + \essinf x}{1+\gamma}$ for any $x \in L^\infty$ and $\gamma \geq 0$. In this way the ask price (including fees when $\gamma > 0$) is bounded from above by the essential supremum and the bid price is bounded from below by the essential infimum (i.e., $P^a_\gamma(x) \leq \esssup x$ and $P^b_\gamma(x) \geq \essinf x$). Implicitly to this construction, and necessary to be assumed in practice, the fees must therefore be bounded $\gamma \in [0,1]$ so that the bid and ask prices are ordered properly to follow financial logic.
\end{remark}

\section{Case Studies}\label{sec:cases}

Within this section, we wish to explore two case studies to explore the versatility and applicability of the AMMs constructed within this work. First, we will replicate a two outcome sports book with data collected for Super Bowl LVII. 
With this empirical case study, we explore the potential profits and losses accrued by the liquidity providers. 
In the other case study, we explore the use of an AMM for financial derivatives by simulating a system with a continuous probability space; in doing so, we prove the viability of our system to adjust the quoted distribution to investor actions.

\subsection{Super Bowl LVII}\label{sec:superbowl}

Sports betting on the Super Bowl is big business, with \$16 billion wagered in 2023 on Super Bowl LVII alone.\footnote{\url{https://www.espn.com/chalk/story/\_/id/35607249/survey-record-504-million-adults-bet-16b-super-bowl}} One popular way to bet on a single game is with the \emph{money line}. As opposed to the probabilities quoted within this work, the money line quotes the profits that would be gained from a winning bet of \$100 (if the underdog) (i.e. if the money line is $m>0$, then betting $100$, will get $m+100$ in case the bid wins), or how much needs to be bet to win \$100 (if the favorite) (i.e. if the money line is $m<0$, betting $-m$ will obtain $-m+100$ in case the bid wins). Therefore, it is easy to reformulate the money line as probabilities. Specifically, if $m$ is the quoted money line for one team to win, then the \emph{ask} probability is $P^a = \frac{-m\ind_{\{m < 0\}} + 100\ind_{\{m \geq 0\}}}{|m| + 100}$.

Consider the possible outcomes for Super Bowl LVII played between the Kansas City Chiefs (KC) and the Philadelphia Eagles (PHI), i.e., $\Omega = \{\KC,\PHI\}$ where we denote the events based on the eventual winner. In the 2 weeks leading off to the start of the game (i.e., once the two final teams are determined), bets are accepted at numerous sports books on this two-event sample space. For the purposes of this data, we collected money line data from Bookmaker.\footnote{Made available from \url{https://pregame.com/game-center/193165/odds-archive}} The time series of quoted (implied) bid and ask probabilities for KC are displayed within Figure~\ref{fig:superbowl-price}. (Note that the bid and ask probabilities for PHI are such that $P^b(\ind_{\PHI}) = 1 - P^a(\ind_{\KC})$ and $P^a(\ind_{\PHI}) = 1 - P^b(\ind_{\KC})$ respectively.)

\subsubsection{Deterministic Backtesting}\label{sec:superbowl-static}
Using this bid-ask spread, we were able to determine the (implied) mid-price by normalizing the ask prices to sum to 1. This price oracle, $P(\ind_{\KC})$ as displayed in Figure~\ref{fig:superbowl-price}, drives our backtesting system. Within this construction we consider 2 utility functions: (i) the expected logarithm utility function as discussed in Example~\ref{ex:uniswap} and (ii) the Liquid StableSwap utility function ($u(x) = \E[\log(x)] + \lambda \log(\E[x])$ with $\lambda = 2$) introduced in \cite{bichuch2022axioms}. With these LBAMMs, we deduce the trades necessary to exactly replicate the mid-price at Bookmaker (without fees). In doing so, we make no assumptions on trader's behavior but rather assume that the quoted mid-price would, fundamentally, be reflected in any constructed market. 
Assuming $L > 0$ cash reserves was initially available to the AMM, the liquidity available at any time can be determined via
\[\Pi(\KC) = L\sqrt{P(\ind_{\PHI})/P(\ind_{\KC})} \quad \text{ and } \quad \Pi(\PHI) = L\sqrt{P(\ind_{\KC})/P(\ind_{\PHI})}\]
for the logarithmic utility and numerically for Liquid StableSwap.
Notably, for both of these LBAMMs, the liquidity available $\Pi$ scales linearly with the initial cash reserves $L$; for this reason we quote all profits and losses as a percentage of the initial cash reserves rather than as an absolute value.
The liquidity remaining, as a percentage of the initial cash reserves, is shown as a time series in Figure~\ref{fig:superbowl-liquidity}.
By computing the liquidity remaining at all times, it is also possible to determine the bets that are actualized. In this way, we can determine profits gained from fees on trades over time as well as the gains or losses based on the final outcome of the event. 

Consider, first, the logarithmic utility function.
In Figure~\ref{fig:superbowl-profit-KC} we see how these profits accumulate until the game outcome ($\KC$) is realized with $\gamma = 1\%$.\footnote{Bookmaker adjusts the money line less frequently than our LBAMM would, thus the profits quoted herein are a lower bound to those that would be collected in practice due to the accumulation of many small trades which introduce additional volatility.} The quoted bid-ask spread over time with this $1\%$ fee is shown in Figure~\ref{fig:superbowl-price}. Notably, in Figure~\ref{fig:superbowl-profit-KC} and Figure~\ref{fig:superbowl-profit-PHI}, we see that the final profits for the liquidity providers can depend significantly on the realized outcome of the game with approximately $2.3\%$ return just prior to the start of the game. Due to the victory of KC, this would jump to nearly $5.4\%$ return; in the counterfactual scenario of a PHI win, the liquidity providers would, instead, be subject to a nearly $0.7\%$ loss. If the fees were increased to only $1.3\%$ (an increase of just 30bps), then a liquidity provider would break even if PHI was realized and have over $6\%$ return from the KC victory.

In contrast, by selecting the Liquid StableSwap utility function, we are able to increase the fees collected from trading substantially.  In Figure~\ref{fig:superbowl-stable-KC} we see how these profits accumulate until the game outcome ($\KC$) is realized with $\gamma = 1\%$.\footnote{As we fixed the prices based on the data, this bid-ask spread is identical to that found with the logarithmic utility function.} From the fees alone, liquidity providers would experience almost a $6.5\%$ return prior to the start of the game.  Due to the victory of KC, this would jump to nearly $15.7\%$ return; in the counterfactual scenario of a PHI win, the liquidity providers would, instead, be subject to a $2.5\%$ loss. With these greater potential losses, the fees would need to increase to $1.38\%$ (an increase of just 38bps) to guarantee a liquidity provider breaks even when PHI was realized (and over $18.1\%$ return from the KC victory).

\begin{figure}[t]
\centering
\begin{subfigure}[t]{0.3\textwidth}
\centering
\includegraphics[width=\textwidth]{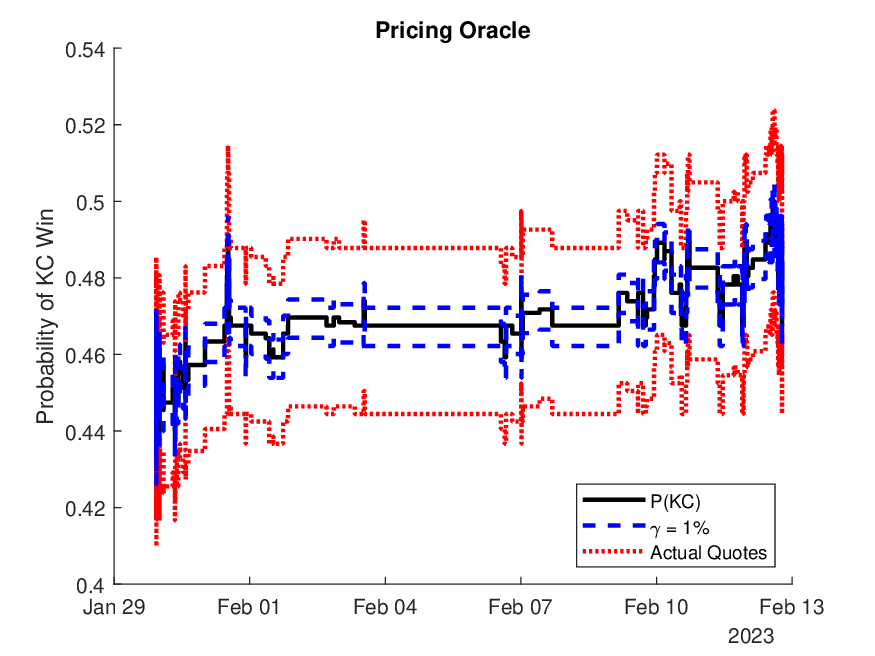}
\caption{Time series of the quoted probability and bid-ask prices for $\KC$.}
\label{fig:superbowl-price}
\end{subfigure}
~
\begin{subfigure}[t]{0.3\textwidth}
\centering
\includegraphics[width=\textwidth]{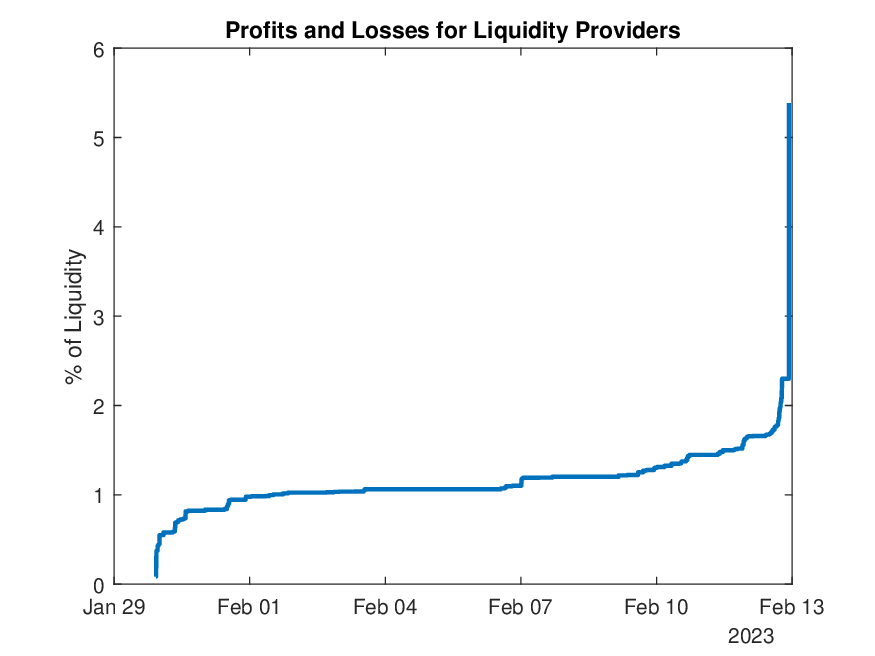}
\caption{Realized profits and losses for logarithmic utility exhibited from Super Bowl LVII.}
\label{fig:superbowl-profit-KC}
\end{subfigure}
~
\begin{subfigure}[t]{0.3\textwidth}
\centering
\includegraphics[width=\textwidth]{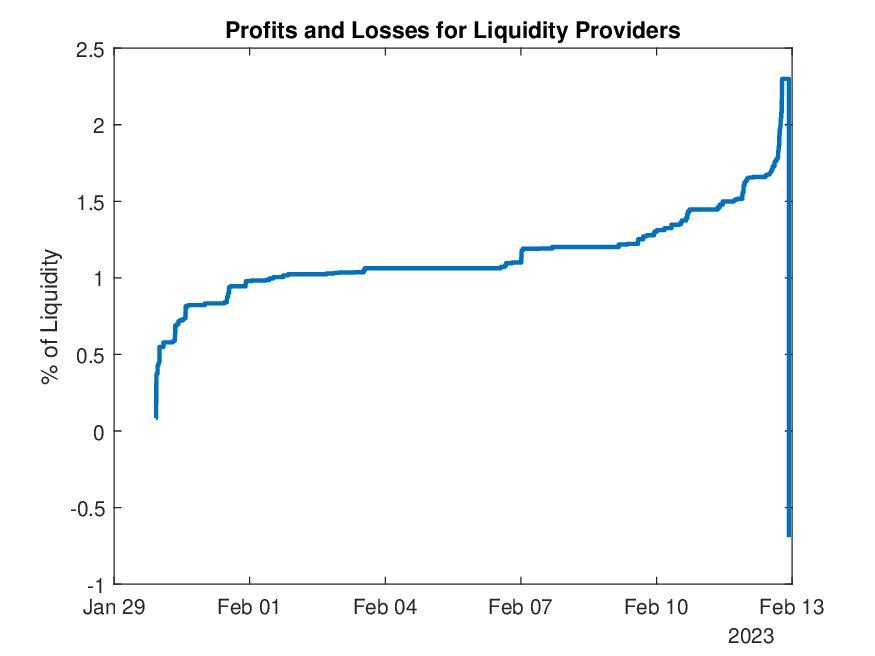}
\caption{Counterfactual profits and losses for logarithmic utility if PHI had won Super Bowl LVII.}
\label{fig:superbowl-profit-PHI}
\end{subfigure}
~
\begin{subfigure}[t]{0.3\textwidth}
\centering
\includegraphics[width=\textwidth]{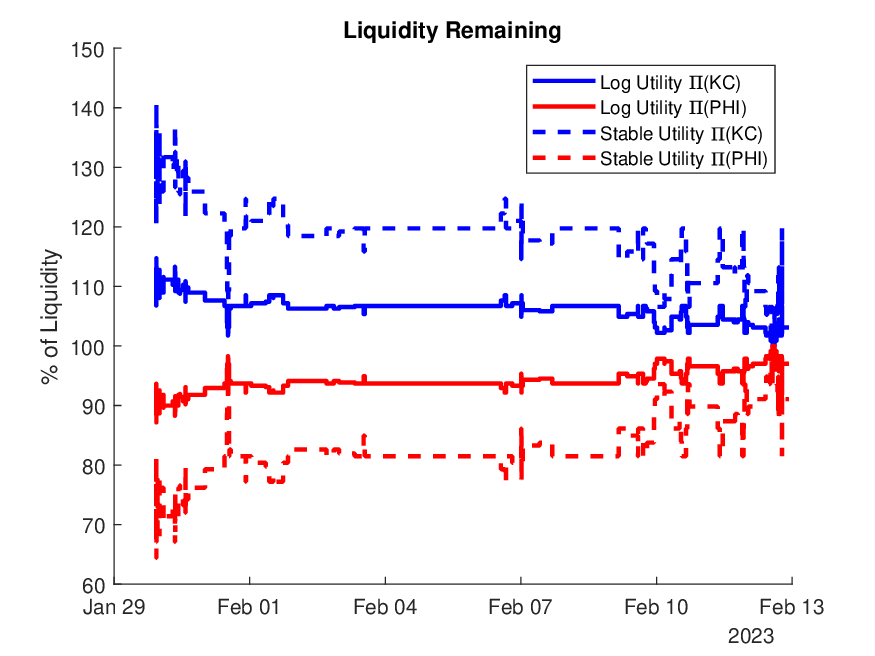}
\caption{Time series of the liquidity remaining with logarithmic and liquid stableswap utility.}
\label{fig:superbowl-liquidity}
\end{subfigure}
~
\begin{subfigure}[t]{0.3\textwidth}
\centering
\includegraphics[width=\textwidth]{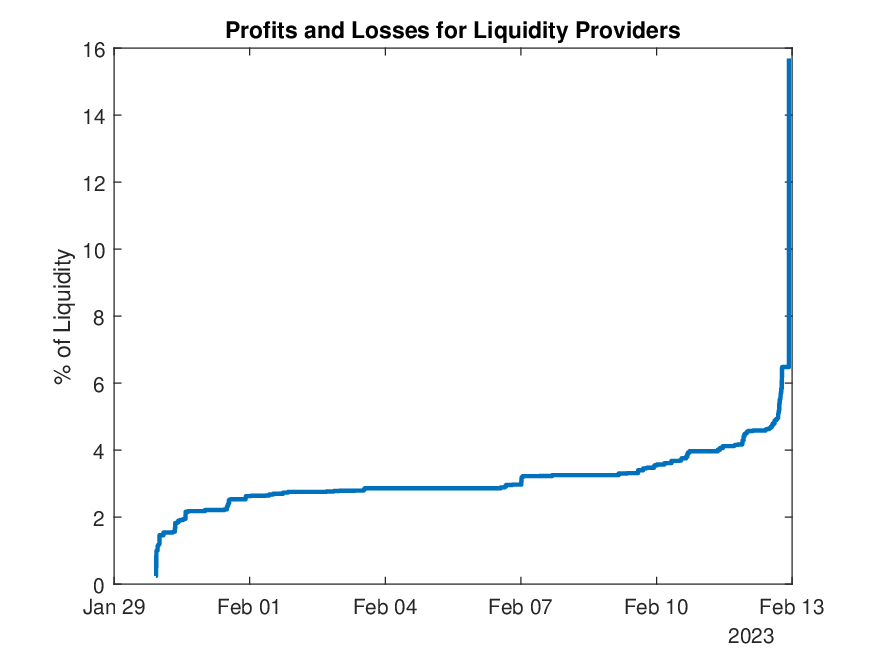}
\caption{Realized profits and losses for liquid stableswap utility exhibited from Super Bowl LVII.}
\label{fig:superbowl-stable-KC}
\end{subfigure}
~
\begin{subfigure}[t]{0.3\textwidth}
\centering
\includegraphics[width=\textwidth]{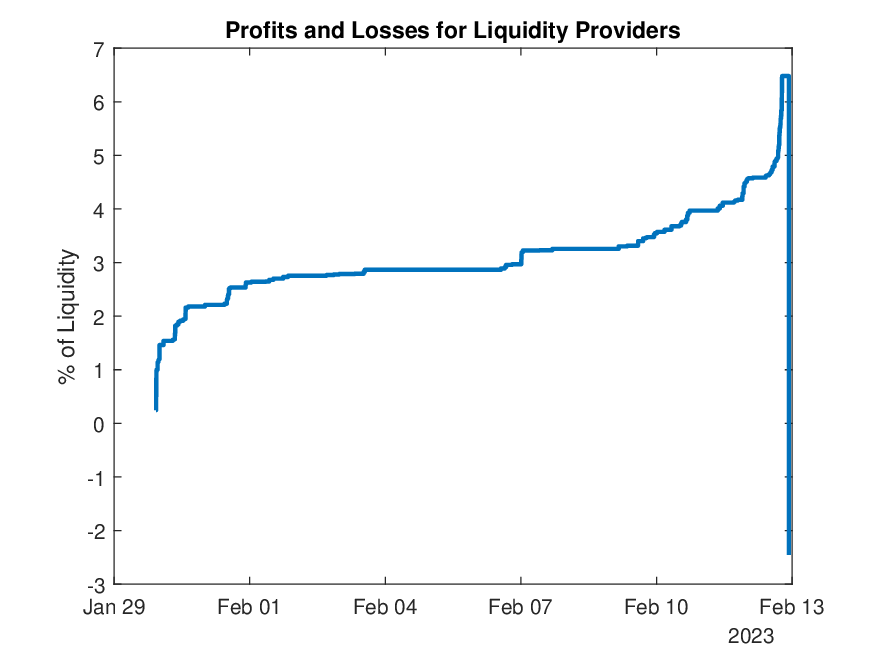}
\caption{Counterfactual profits and losses for liquid stableswap utility if PHI had won Super Bowl LVII.}
\label{fig:superbowl-stable-PHI}
\end{subfigure}
\caption{Section~\ref{sec:superbowl-static}: Visualization of LBAMMs when applied to Super Bowl LVII.}
\label{fig:superbowl}
\end{figure}

\subsubsection{Stochastic Backtesting}\label{sec:superbowl-dynamic}
We now aim to introduce a stochastic backtesting framework to the Bookmaker money line data we have collected. In contrast to the deterministic backtest of Section~\ref{sec:superbowl-static} in which the AMM's price oracle exactly follows the mid-price of the external sports book, here we assume that the underlying market price follows a Brownian motion, starting from the initial mid-price of the external market. We constrain the price process to remain within Bookmaker's bid-ask spread through a reflection principle. However, rather than assuming the AMM perfectly tracks this price process, we assume trades occur only to capture arbitrage opportunities between the true (stochastic) price process and the price oracle when accounting for the fees $\gamma$. In this way, as in Section~\ref{sec:superbowl-static}, we make no assumptions on bettor behavior but rather utilize the available betting line data to construct simple price processes in order to find (potential) arbitrage opportunities.

Due to the profits exhibited in the prior backtest, we focus exclusively on a dynamic version of the Liquid StableSwap AMM herein with $u_t(x) = \E^{\P_t}[\log(x)] + \lambda \log(\E^{\P_t}[x])$ with $\lambda = 2$ and $\P_t(KC)$ is determined by the (implied) mid-price of Bookmaker money line data at time $t$. The choice of dynamic probability measure is such that the stable region of the AMM tracks the mid-price of the external market and is intended to maximize the fees collected. As far as the authors are aware, such a dynamic AMM has not previously been proposed in either prediction or cryptocurrency markets.

Figure~\ref{fig:superbowl-dynamic} displays the expected profits, assuming the terminal mid-price of the external data is the true probability of a KC victory, under varying fee levels $\gamma \in \{0\%,0.5\%,1\%,...,5\%\}$ and (annual) volatilities of the Brownian motion  $\sigma \in \{5\%,25\%,50\%\}$. To complete these computations we consider Monte Carlo simulations with 500 price paths (with a time step of 1 minute); both the average and the 95\% confidence interval are plotted in Figure~\ref{fig:superbowl-dynamic}.
The common pattern in these profits across different volatilities $\sigma$ is due to the competing elements insofar as the number of trades decreases as the fees $\gamma$ increase as there are fewer arbitrage opportunities to exploit. Thus there is a tradeoff between the number of transactions and the fees collected per transaction. Notably, at the extremes, no fees are collected when $\gamma = 0$ and no trades occur when $\gamma = 5\%$, as this fee level marginally exceeds the bid-ask spread of Bookmaker.

Consider now the optimal fee levels for the liquidity provider. Considering Figure~\ref{fig:superbowl-dynamic}, though subtle, it can be seen that the optimal fee level $\gamma$ is decreasing in volatility. That is, as volatility increases, the greater number of transactions dominates the aforementioned tradeoff. 
Intuitively, higher volatility increases the probability that the AMM's price sufficiently deviates from the Brownian price path to create arbitrage opportunities. In opposition, and as already noted, higher fees reduce the trading activity by widening the no-arbitrage range. Therefore, when volatility is higher, increased potential betting activity around the Brownian price path dominates the reduction in fees collected from each individual transaction.
Furthermore, these optimal fee levels (roughly around $\gamma \approx 1\%$) -- which result in significant expected profits of between 1\% and 6\% return in a 2 week period -- come at a much lower bid-ask spread than that quoted by Bookmaker. 
Consequently, the introduction of the AMM creates a win-win situation: liquidity providers can optimize their expected profits and liquidity takers find a more efficient market in which to transact. 
By the same arguments, the optimal fees are monotonic with respect to Bookmaker's bid-ask spread but are always dominated by those external fees. In this way, the optimal AMM fee construction must always increases market efficiency. 

\begin{figure}
\centering
\includegraphics[width=0.6\textwidth]{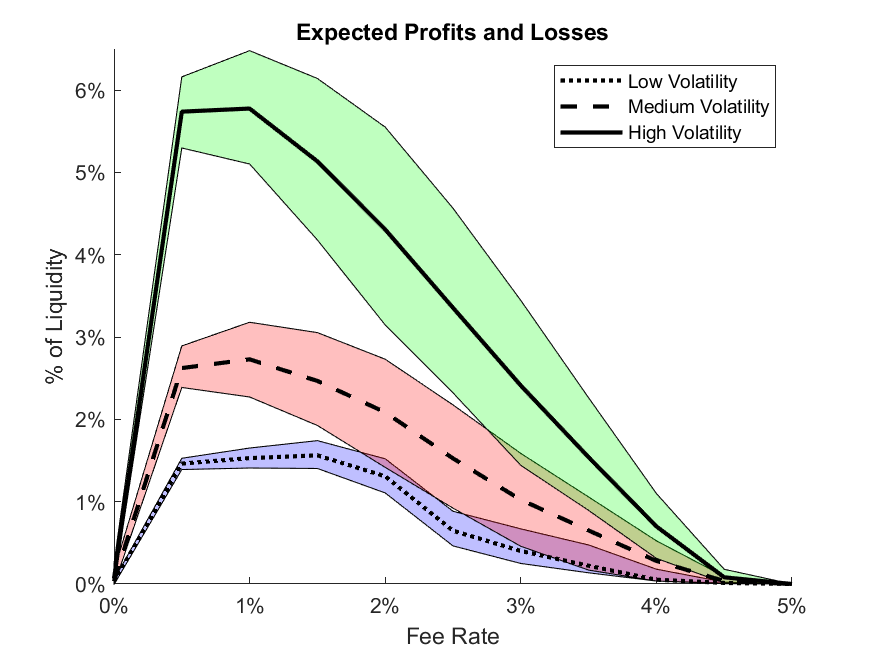}
\caption{Section~\ref{sec:superbowl-dynamic}: Expected profits (average and 95\% confidence interval) for dynamic liquidity stableswap utility on simulated betting data calibrated to Super Bowl LVII.}
\label{fig:superbowl-dynamic}
\end{figure}

\subsection{Financial Derivatives}\label{sec:derivative}
In this final case study, we explore the use of an LBAMM for pricing European options with a fixed expiry. Notably, for this construction, we want to consider a general probability space rather than the finite probability space considered in the prior case study. In doing so, we consider the LBAMM based on the scaled and shifted logarithmic utility function as is considered in Example~\ref{ex:essinf} with $\epsilon = 10^{-6}$ chosen arbitrarily. This AMM is constructed with initial cash reserves of $L = 100$ distributed so as to create an initial lognormal distribution for the price at the maturity time.

Herein we consider a derivatives payoff as a bet on a market outcome. The LBAMM then represents a type of utility indifference pricing, see, e.g., \cite{carmona2008indifference}. In our case, an LBAMM is indifferent between two scenarios -- one in which no bet is placed, and the other in which a bet is placed with the liquidity increased by the bet's price. 
However, in contrast to a typical derivatives market, we do not consider the market for the underlying asset(s), i.e., it is not possible to hedge using the underlying securities. Instead, the initial liquidity is used to compensate for the fact that there is no hedge possible for the market makers; this initial liquidity is used to guarantee that the LBAMM can pay off the sold derivatives at maturity.
That is, within this framework without the underlying market for hedging, options and other derivatives can be viewed as classical bets on market outcomes.

In order to avoid this derivatives market from converging to a Dirac measure, we assume that market trading ends a fixed amount of time prior to expiry. Due to the liquidity-bounded loss for the LBAMM, this market is able to trade any measurable payoff structure $x \in L^\infty$. In Figure~\ref{fig:derivative}, the price density is plotted under three circumstances: (i) the initial lognormal distribution; (ii) after 50 put options with strike at \$1 have been purchased (at an average cost of \$0.1103 per contract); and (iii) after 100 put options with strike at \$1 have been purchased (at an average cost of \$0.1166). It can clearly be seen that these derivative purchases appropriately shift the mass of probability leftward and increase its peakedness.  In addition, though subtle, there is a kink in the distribution at \$1 to match up with the strike price used. 
Finally, we would like to point out that in the lognormal setup used for this case study, we can compare these per-contract costs to the Black-Scholes price of \$0.0997. This price corresponds to the cost of a hedge under a risk-neutral measure. This should be compared to the price we obtained earlier, which represents the utility indifference price in a scenario where no hedging is allowed.
\begin{figure}
\centering
\includegraphics[width=0.6\textwidth]{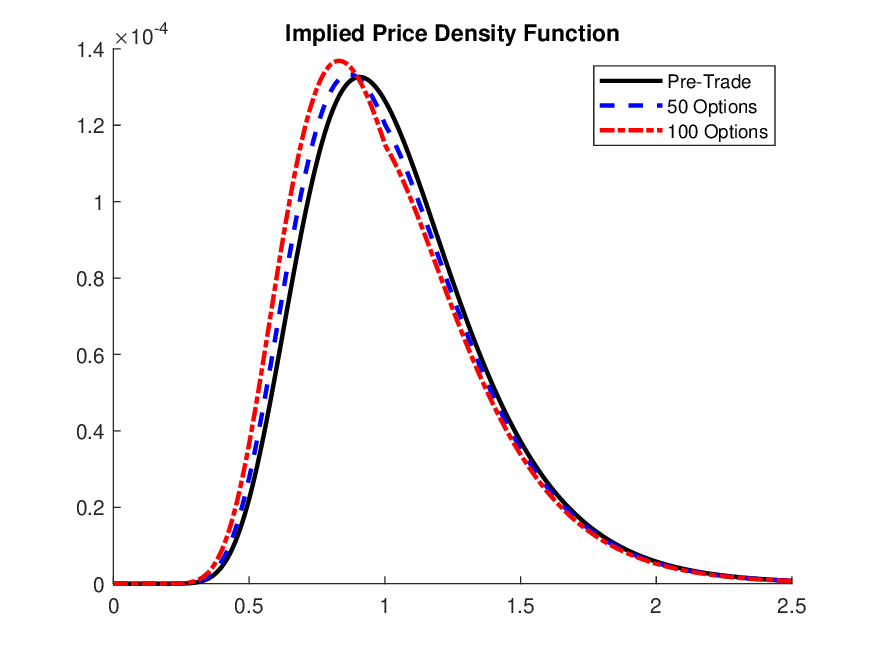}
\caption{Section~\ref{sec:derivative}: Impact of purchasing put options on the quoted pdf for market prices at maturity.}
\label{fig:derivative}
\end{figure}

While put options naturally have uniformly bounded payoffs, that is not true for all options. For instance, call options have unbounded payoff and, thus, do not fit into the $L^\infty$ framework proposed within this work.
However, much as in the risk measurement literature (see, e.g., \cite{delbaen2002coherent}), call options and other derivatives with unbounded payouts can be adapted to this framework by imposing a cap to the bet, e.g., the adapted call option would take value $x_T := \min\{(S-K)^+ \, , \, T\}$ for underlying price $S$, strike price $K$, and some cap $T > 0$ sufficiently large. 
Figure~\ref{fig:call-fixedL} displays the cost of 100 at-the-money call options with stirke \$1 assuming an initial lognormal distribution; notably, if $T$ is set too high, e.g., so that $T > \esssup\Pi$, then the cost of these options will be proportional to $T$ to guarantee the positivity of the resulting position $\Pi - x_T + C(x_T;\Pi)\vec{1} \in L^\infty_*$. 
However, if we scale the initial liquidity with the cap $T$ then, due to the low probability of high payouts, the cost of these 100 call options begins to fall for $T$ (and therefore liquidity) large enough; this is displayed in Figure~\ref{fig:call-propL}.
\begin{figure}
\centering
\begin{subfigure}[t]{0.45\textwidth}
\centering
\includegraphics[width=\textwidth]{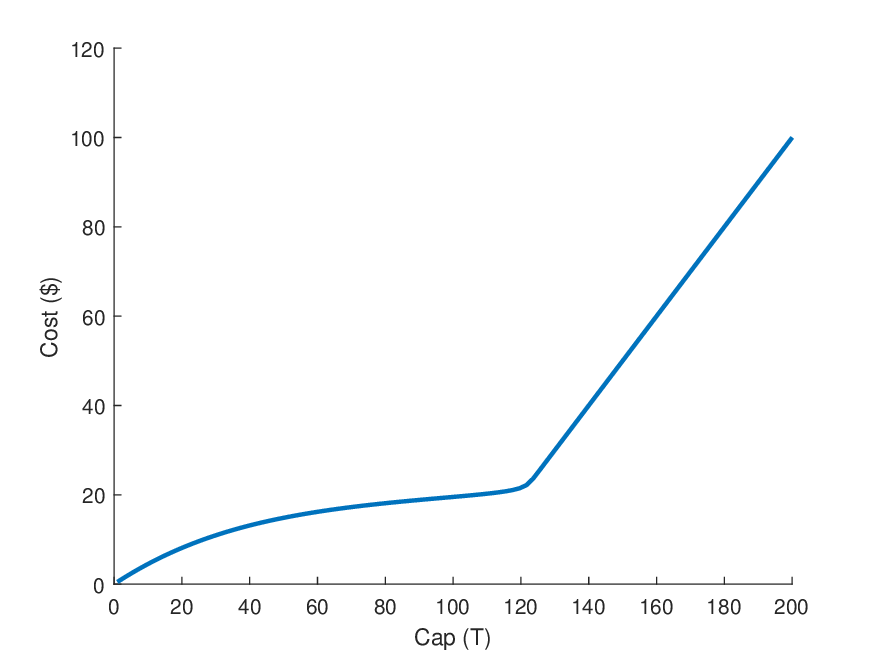}
\caption{Fixed initial liquidity $L = \$100$.}
\label{fig:call-fixedL}
\end{subfigure}
~
\begin{subfigure}[t]{0.45\textwidth}
\includegraphics[width=\textwidth]{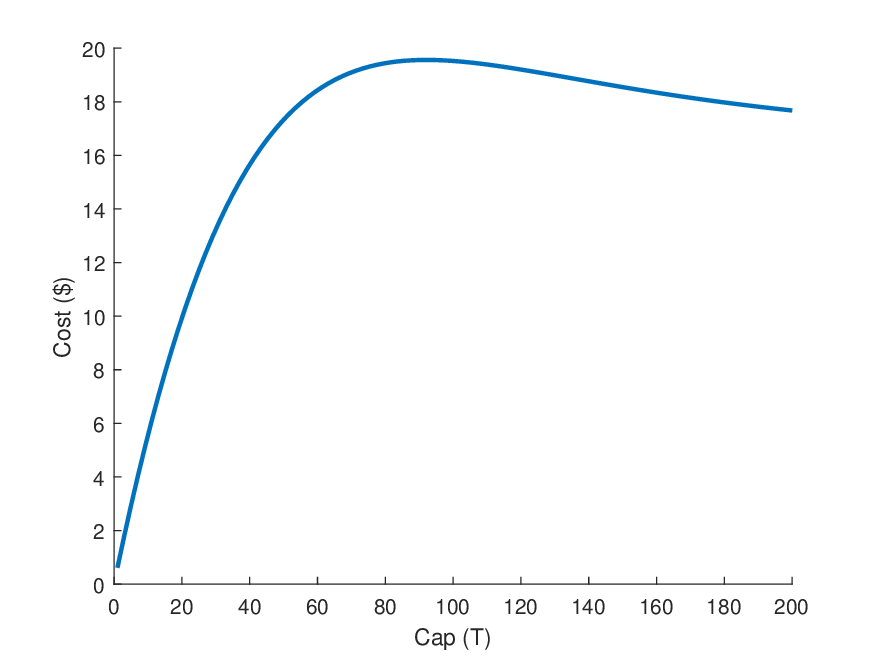}
\caption{Varying initial liquidity $L = \$T$.}
\label{fig:call-propL}
\end{subfigure}
\caption{Section~\ref{sec:derivative}: Impact of placing a cap on the cost of 100 at-the-money call options with strike \$1.}
\label{fig:call}
\end{figure}

\begin{remark}
For both of our numerical experiments with puts and call options, we have only considered the case with a fixed strike price. We note that LBAMMs are designed to permit user to select any desired strike price. In such a way, even when considering digital puts and calls (to guarantee boundedness and eliminate the need for a cap, as was done for our European call option example above), we utilize a general probability space to allow users to freely set the strike price.
\end{remark}

\section{Conclusion}\label{sec:conclude}
To summarize, in this manuscript, we have presented a general utility framework for a prediction market maker over general probability spaces. The novelty of this framework is that it considers the liquidity separately and in a decentralized way, which in turn allows for additional liquidity to be provided or withdrawn after the market opened and bets have already been received. We have also investigated the resulting properties of the pricing oracles. Additionally, we have proposed a novel way to charge fees on the random part of the bet that does not create an arbitrage.

Future research in decentralized prediction markets should more rigorously address the problem of optimizing fees so as to maximize (risk-adjusted) profits gained by the liquidity providers. Doing so will require dynamic models of betting so as to accurately study risks and returns. Additionally, in the proposed setup, we ignore the role of information on prediction markets; if an event becomes a certainty prior to the maturity of the bet, the liquidity providers would be arbitraged until no cash reserves remain. To avoid such a fate, we recommend a dynamic fee schedule which can counteract the informational gains of bettors. Such a system, especially to optimize this fee schedule, would be of great import for practical implementations.
\section*{Conflicts of Interest}
We, the authors, confirm that we have no relevant or material financial interests that relate to the research in this paper.
\section*{Data Availability}
The data that support the findings of this study are available from the any of the authors upon reasonable request.
\section*{Funding Information}
Maxim Bichuch and Zachary Feinstein are partially supported by Stellar Development Foundation Academic Research Grants program.\\ 
Zachary Feinstein acknowledges the support from NSF IUCRC CRAFT center research grant (2113906) for this research. The opinions expressed in this publication do not necessarily represent the views of NSF IUCRC CRAFT.

\bibliographystyle{acm}
\small{\bibliography{bibtex2}}

\end{document}